\newtheorem{Thm}{Theorem}[section]
\newtheorem{Lem}[Thm]{Lemma}
\newtheorem{Cor}[Thm]{Corollary}
\newtheorem{Claim}{Claim}
\newtheorem{Def}{Definition}
\newtheorem{Conj}{Conjecture}
\newenvironment{proof}{\noindent {\textbf{Proof }}}{$\Box$ \medskip}
\def\QEDopen{{\setlength{\fboxsep}{0pt}\setlength{\fboxrule}{0.2pt}\fbox{\rule[0pt]{0pt}{1.3ex}\rule[0pt]{1.3ex}{0pt}}}}
\def\QED{\QEDopen}
\def\proof{\noindent{\bf Proof}: }
\def\endproof{\hspace*{\fill}~\QED\par\endtrivlist\unskip}
\newcommand {\ie} {\textit{i.e.}\xspace}
\newcommand {\st} {\textit{s.t.}\xspace}
\newcommand\av{\mbox{\bf{\bf E}}}
\newcommand\ket[1]{| #1 \rangle}
\newcommand\bra[1]{\langle #1 |}
\newcommand\qip[2]{\langle #1 | #2 \rangle}
\begin{document}

\title{\textbf{BQP} and \textbf{PPAD}}

\author{Yang Daniel Li}

\maketitle
\thispagestyle{empty}
\setcounter{page}{0}
\abstract{
We initiate the study of the relationship between two complexity classes, \textbf{BQP} (\textbf{B}ounded-Error \textbf{Q}uantum \textbf{P}olynomial-Time) and \textbf{PPAD} (\textbf{P}olynomial \textbf{P}arity \textbf{A}rgument, \textbf{D}irected). We first give a conjecture that \textbf{PPAD} is contained in \textbf{BQP}, and show a necessary and sufficient condition for the conjecture to hold. Then we prove that the conjecture is not true under the oracle model. In the end, we raise some interesting open problems/future directions.
}

\newpage

\section{Introduction}
Quantum computing and algorithmic game theory are two exciting and active areas in the last two decades. Quantum computing lies in the intersection of computer science and quantum physics, and studies the power and limitation of a quantum computer. Algorithmic game theory touches upon the foundations of both computer science and economics, and aims to design efficient algorithms in strategic circumstances. If we want to come up with some examples that are able to demonstrate the interaction between computer science and other disciplines, then quantum computing and algorithmic game theory are two perfect candidates. Quantum mechanics may provide additional computational power, and quantum computers can test the foundations of quantum mechanics. Economics lends some strategic views, and computer science rewards with computational points of view. We refer readers to \cite{NC2000} and \cite{NRTV2007} for more information.

The central topics of quantum computing and algorithmic game theory are the hardness of two complexity classes, \textbf{BQP} (\textbf{B}ounded-Error \textbf{Q}uantum \textbf{P}olynomial-Time) and \textbf{PPAD} (\textbf{P}olynomial \textbf{P}arity \textbf{A}rgument, \textbf{D}irected). {\bf BQP}, as introduced by Bernstein and Vazarani \cite{BV1997}, characterizes efficient computation of a quantum computer and is the quantum analog of {\bf BPP}. Very little is known about {\bf BQP}, and a wide belief is that {\bf BQP} and {\bf NP} are incomparable complexity classes \cite{BBBV1997, BV1997, Aar2010}. Papadimitriou introduced the complexity class {\bf PPAD} \cite{Pap1994}, which is a special class between {\bf P} and {\bf NP}. Since then, the hardness of {\bf PPAD} has also become a longstanding open problem. Although lots of important problems, say the problem of computing a Nash equilibrium (\emph{NASH} for short) \cite{DGP2009, CDT2009}, were shown to be {\bf PPAD}-complete, there have been very few relations from {\bf PPAD} to other complexity classes.

In this paper, we initiate the study of the relationship between {\bf BQP} and {\bf PPAD}. The representative problem of {\bf PPAD} is \emph{NASH} \cite{Pap1994}, and the most well-known problem in {\bf BQP} is factoring \cite{Sho1997}. Both \emph{NASH} and factoring are in the complexity class {\bf TFNP} (the set of total search problems, see \cite{MP1991}) in the sense that every instance of \emph{NASH} and factoring always has a solution. Therefore, it seems that there may be some relationship between {\bf PPAD} and {\bf BQP}.

In fact, this possible relation was (implicitly) mentioned in a talk given by Papadimitriou ten years ago \cite{Pap2001}. Papadimitrious said that ``together with factoring, the complexity of finding a Nash equilibrium is in my opinion the most important concrete open question on the boundary of {\bf P} today". In other words, Papadimitriou asked: do there exist (classically and deterministically) efficient algorithms for factoring and \emph{NASH}? As it have been shown that there exist efficient quantum algorithms for factoring \cite{Sho1997}, it is natural for us to ask: do there exist efficient quantum algorithms for \emph{NASH}? More generally, is {\bf PPAD} contained in {\bf BQP}?

Our conjecture is that {\bf PPAD} is contained in {\bf BQP}. Formally,
\begin{Conj}
{\bf PPAD} $\subseteq$ {\bf BQP}.
\end{Conj}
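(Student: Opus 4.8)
The plan is to reduce the conjecture to a statement about a single complete problem and then try to design a quantum algorithm that exploits structure unavailable to a classical machine. First I would invoke the fact that \textbf{PPAD} has complete problems, the canonical one being \emph{END-OF-LINE}: given Boolean circuits $S,P:\{0,1\}^n\to\{0,1\}^n$ that compute the successor and predecessor functions of an implicitly-defined directed graph in which every vertex has in-degree and out-degree at most one, and given that $0^n$ is a source, find a sink or a source other than $0^n$. Since every problem in \textbf{PPAD} reduces in (classical, deterministic) polynomial time to \emph{END-OF-LINE}, and \textbf{BQP} is closed under polynomial-time Turing reductions, it suffices to prove \emph{END-OF-LINE} $\in$ \textbf{BQP}. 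Thus the entire question becomes: can a polynomial-time quantum algorithm locate the end of an exponentially long line?

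Second, I would catalogue what quantum primitives could conceivably help and where they fall short. The line has length up to $2^n$, so naively tracing it from $0^n$ costs exponential time, and a direct Grover search over the $2^n$ vertices for ``a vertex with a successor but no predecessor'' costs $\Theta(2^{n/2})$ --- still exponential. Hence amplitude amplification alone is insufficient, and one must use the promise structure of the instance. The most promising route, by analogy with Shor's algorithm for factoring, is to look for algebraic or group-theoretic structure in the map $S$: if line-following were compatible with an abelian (or nice non-abelian) group action --- e.g. realized by iterated modular exponentiation or another hidden-subgroup-type function --- then period finding or the hidden subgroup machinery could jump across the line in polynomial time. A complementary route is to attack a different \textbf{PPAD}-complete problem whose structure is more exposed, such as the linear complementarity problem underlying \emph{NASH} or the Sperner/Brouwer fixed-point problem, and bring quantum linear-algebra primitives to bear there, then pull the result back through the completeness reduction.

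The hard part --- and the reason this is stated only as a conjecture --- is that any such proof must be \emph{non-relativizing}: later in this paper we exhibit an oracle relative to which \emph{END-OF-LINE}, and hence \textbf{PPAD}, is not in \textbf{BQP}, which rules out every black-box quantum algorithm, including all variants of Grover search and quantum walks that treat $S$ and $P$ purely as oracles. So the genuinely missing ingredient is a way to open up the circuits $S,P$ (or the payoff matrices of a \emph{NASH} instance) and extract structure that is simultaneously (i) guaranteed to be present in \emph{every} \textbf{PPAD}-complete instance and (ii) exploitable by a quantum but not a classical machine; no candidate for such structure is currently known. A complete proof would therefore consist of identifying this structure and showing it meets the necessary-and-sufficient characterization of the conjecture that we establish below; conversely, showing no such structure can exist would refute the conjecture outright.
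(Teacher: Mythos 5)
You have not proved the statement, but neither does the paper: this is stated as a \emph{conjecture}, and the paper's actual contributions are (i) a necessary-and-sufficient condition (Theorem \ref{thm: QNE-PPAD}: ${\bf PPAD}\subseteq{\bf BQP}$ iff a quantum Nash equilibrium can be computed in quantum polynomial time) and (ii) an $\Omega(\sqrt{N})$ quantum query lower bound showing the conjecture fails relative to an oracle. Your proposal is honest about this status and its framing is consistent with the paper: you correctly reduce the question to a single complete problem (\emph{END-OF-LINE}, using completeness and closure of ${\bf BQP}$ under polynomial-time reductions), you correctly observe that Grover-type search gives only $\Theta(2^{n/2})$ and hence cannot help, and you correctly identify that any proof must be non-relativizing in light of the oracle bound --- which is exactly the moral the paper draws in its concluding section.

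As a proof, however, the gap is the entire algorithmic content. No quantum algorithm for \emph{END-OF-LINE}, for \emph{NASH}, or for computing a quantum Nash equilibrium is constructed; the hidden-subgroup/period-finding analogy is only an analogy (no group action compatible with line-following is exhibited, and none is known to exist for general circuits $S,P$), and the suggestion to apply quantum linear-algebra primitives to the linear complementarity or Sperner formulations is not backed by any argument that those primitives evade the black-box obstruction by exploiting circuit structure. Your final sentence concedes precisely this: the ``structure that is present in every instance and exploitable quantumly but not classically'' is the missing ingredient, and identifying it is the open problem, not a step you carry out. So the proposal should be read as a correct survey of why the conjecture is hard and how it interfaces with the paper's Theorem \ref{thm: QNE-PPAD} and the oracle lower bound, not as a proof or even a partial proof of the conjecture itself.
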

The conceived relationship can be illustrated by Figure $1$, where the \emph{green+red} is {\bf BQP} and the \emph{red} denotes {\bf PPAD}.

\begin{figure*}[t]
\centering
\includegraphics[angle=0, width=0.7\textwidth]{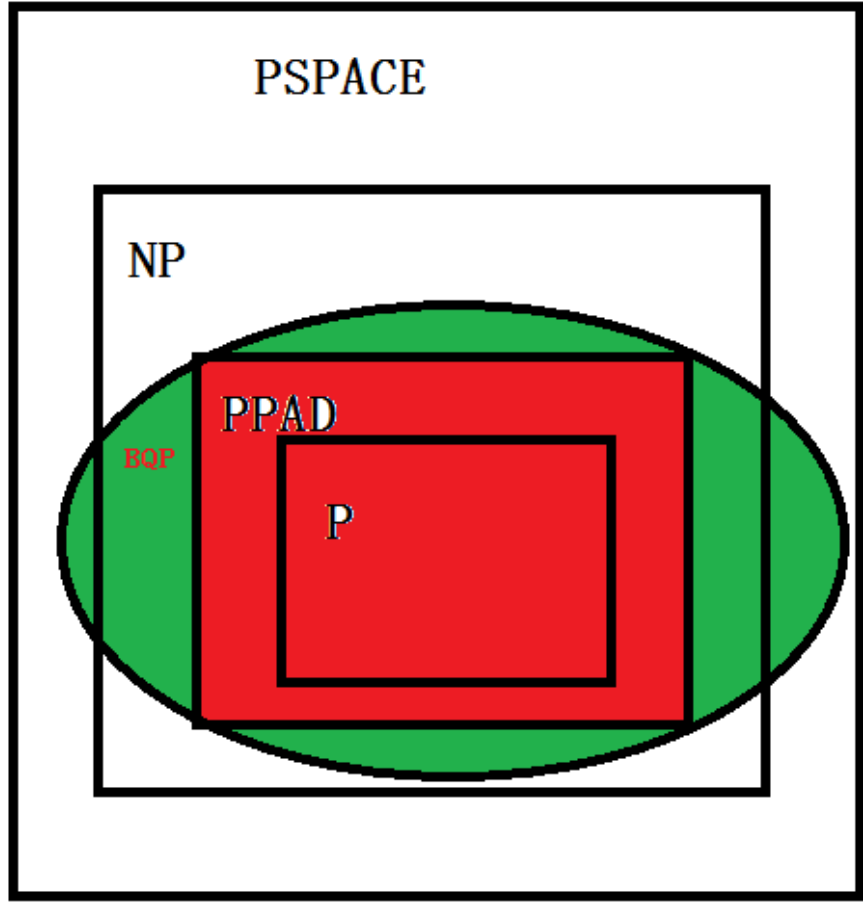}
\vspace{-5pt}
\caption{\label{Figure 1}The conceived picture}
\vspace{-5pt}
\label{fig:exp:bra}
\end{figure*}

We will formally define quantum Nash equilibrium, the quantum analog of Nash equilibrium, and prove the fact that {\bf PPAD} is contained in {\bf BQP} if and only if there exists a polynomial-time quantum algorithm for computing a quantum Nash equilibrium. Therefore, to prove the conjecture, we need to find an efficient (polynomial-time) quantum algorithm, and to disprove the conjecture, we have to show a super-polynomial lower bound for the time complexity of computing a quantum Nash equilibrium.

Another way to express the conjecture is that quantum computers can compute a Nash equilibrium in polynomial time, or that quantum computers can exponentially speed-up the computation of a Nash equilibrium. And we will rule out this possibility under the oracle model.

The organization of this paper is as follows. Section $2$ presents the definition of {\bf BQP} and {\bf PPAD}, and in Section $3$, we introduce the notion of quantum Nash equilibrium and analyze it. Section $4$ provides the necessary and sufficient condition for our conjecture to hold. Section $5$ proves a lower bound of computing a Nash equilibrium using quantum computers under the oracle model. And we concludes the paper with some open problems/future directions in Section $6$.

\section{Preliminaries}
\subsection{Notation}
Some notations used throughout the paper are listed here.
\begin{itemize}
\item $\mathbb{N}$: the set of natural numbers, $\{1, 2, 3, \ldots \}$.
\item $[n]$: the integer set $\{1, 2, \ldots, n\}$.
\item $\mathbb{R}$: the set of real numbers.
\item $||\phi||$: the $2$-norm of a vector $\phi$. If $\phi$ is a quantum state $\sum_x \alpha_x \ket{x}$, then $||\phi|| = \sqrt{\sum_x |\alpha_x|^2}$.
\end{itemize}

\subsection{BQP}
\cite{BV1997} introduced the notion of {\bf BQP}, and a simplified version is as follows.
\begin{Def}
A language $L$ is in {\bf BQP} if and only if there exists a polynomial-time uniform family of quantum circuits $\{Q_n: n \in \mathbb{N}\}$, such that
\begin{itemize}
\item For all $n \in \mathbb{N}$, $Q_n$ takes $n$ qubits as input and outputs $1$ bit.
\item For all $x$ in $L$, $Pr(Q_{|x|}(x) = 1) \ge 2/3$.
\item For all $x$ not in $L$, $Pr(Q_{|x|}(x) = 0) \ge 2/3$.
\end{itemize}
\end{Def}

\subsection{PPAD and PPAD-completeness}
Total search problems are problems for which solutions are guaranteed to exist, and the challenge is to find a specific solution. In \cite{Pap1994}, Papadimitriou defined the following total search problem.
\begin{Def} (END-OF-THE-LINE)
Let $S$ (standing for {\bf S}uccessor) and $P$ (standing for {\bf P}redecessor) be two polynomial size circuits that given input strings $\{0,1\}^n$ output strings $\{0,1\}^n$. We further require that $P(0^n)=0^n\ne S(0^n)$. The aim is to find an input $x$ such that $P(S(x)) \ne x$ or $S(P(x))\ne x \ne 0$.
\end{Def}
A more intuitive description of END-OF-THE-LINE is as follows. $G$ is a (possibly exponentially large) directed graph with no isolated vertices, and with every vertex having at most one predecessor and one successor. $G$ is specified by giving a polynomial-time computable function $f(v)$ (polynomial in the size of $v$) that returns the predecessor and successor (if they exist) of the vertex $v$. Given a vertex $s$ in $G$ with no predecessor, find a vertex $t \ne s$ with no predecessor or no successor. (The input to the problem is the source vertex $s$ and the function $f(v)$). In other words, we want any source or sink of the directed graph other than $s$.

\textbf{PPAD} was defined based on this problem.
\begin{Def} (\textbf{PPAD})
The complexity class {\bf PPAD} contains all total search problems reducible to END-OF-THE-LINE in polynomial time.
\end{Def}
\textbf{PPAD}-completeness was also defined.
\begin{Def} (\textbf{PPAD}-completeness)
A problem is called {\bf PPAD}-complete if it is in {\bf PPAD} and all problems in {\bf PPAD} can reduce to it in polynomial time.
\end{Def}

\section{Quantum Nash Equilibrium}
\subsection{Classical Equilibria}
First, we review classical Nash equilibria and correlated equilibria, all of which can be found in \cite{NRTV2007}.

In a classical game there are $n$ players, labeled $\{1,2,\ldots,n\}$. Each player $i$ has a set $S_i$ of strategies. %To play the game, each player $i$ selects a strategy $s_i$ from $S_i$.
We use $s=(s_1,\ldots, s_n)$ to denote the vector of strategies selected by the players and $S=\times_i S_i$ to denote the set of all possible joint strategies. Each player $i$ has a utility function $u_i: S \rightarrow \mathbb{R}$, giving the payoff or utility $u_i(s)$ to player $i$ on the joint strategy $s$. There is a solution concept called Nash equilibrium, in which the equilibrium strategies are known by all players, and no player can gain more by unilaterally modifying his or her choice. Formally,
\begin{Def}
A \emph{mixed Nash equilibrium} is a probability vector $p = p_1 \times \ldots \times p_n$ for some probability distributions $p_i$'s over $S_i$'s satisfying that
\begin{align*}
	\sum_{s_{-i}} p_{-i}(s_{-i}) u_i(s_i,s_{-i}) \geq  \sum_{s_{-i}} p_{-i}(s_{-i}) u_i(s_i',s_{-i}), \qquad \forall i\in [n], \forall s'_i\in S_i,\forall  s_i \in S_i \; s.t. \; p_i(s_i)>0,
\end{align*}
\end{Def}
where $s_{-i}$ is the strategies chosen by players but player $i$, and $p_{-i}$ denotes the probability distribution over $s_{-i}$. Informally speaking, for a mixed Nash equilibrium, the expected payoff over probability distribution of $s_{-i}$ is maximized, i.e. $\av_{s_{-i}}[u_i(s_i,s_{-i})] \ge \av_{s_{-i}}[u_i(s_i',s_{-i})]$. We can further relax the Nash condition and define an $\epsilon$-approximate Nash equilibrium to be a profile of mixed strategies such that no player can gain more than $\epsilon$ amount by changing his/her own strategy unilaterally. Formally,
\begin{Def}
An \emph{$\epsilon$-approximate Nash equilibrium} is a probability vector $p = p_1 \times \ldots \times p_n$ for some probability distributions $p_i$'s over $S_i$'s satisfying that
\begin{align*}
	\sum_{s_{-i}} p_{-i}(s_{-i}) u_i(s_i,s_{-i}) \geq  \sum_{s_{-i}} p_{-i}(s_{-i}) u_i(s_i',s_{-i})-\epsilon, \qquad \forall i\in [n], \forall s'_i\in S_i,\forall  s_i \in S_i \; s.t. \; p_i(s_i)>0,
\end{align*}
\end{Def}
where $\epsilon > 0$. In addition, the probability distribution of each player may not be independent, but correlated, forming the notion of correlated equilibria.
\begin{Def} \label{thm:CNE}
A \emph{correlated equilibrium} is a probability distribution $p$ over $S$ satisfying that
\begin{align*}
	\sum_{s_{-i}} p(s_i,s_{-i}) u_i(s_i,s_{-i}) \geq  \sum_{s_{-i}} p(s_i,s_{-i}) u_i(s_i',s_{-i}), \qquad \forall i\in [n], \forall s_i, s'_i\in S_i.
\end{align*}
\end{Def}
Notice that a correlated equilibrium $p$ is a Nash equilibrium if and only if $p$ is a product distribution.
\subsection{Quantum Equilibria}
This part generalizes classical equilibria to quantum equilibria, where players are allowed to use ``quantum" strategies. To be more precise, each player $i$ now has a Hilbert space $H_i = span\{s_i: s_i\in S_i\}$, and the joint strategy can be any quantum state $\rho$ in $H = \otimes_i H_i$. The payoff/utility for player $i$ on joint strategy $\rho$ is $\mu_i(\rho) =\av[u_i(s(\rho))]=\sum_s \bra{s} \rho \ket{s} u_i(s)$, where $s(\rho)$ is the outcome pure strategy when $\rho$ is measured according to the computational basis $\{s: s\in S\}$. Note that what each player $i$ can do is to apply an admissible super-operator $\Phi_i$ on her own space $H_i$. We sometimes write $\Phi_i$ for $\Phi_i \otimes I_{-i}$. We use $CPTP(X)$ to denote the set of all admissible (completely positive and trace preserving) super-operators on a space $X$. The notions of quantum Nash equilibria and quantum correlated equilibria are defined as follows.
\begin{Def}
A \emph{quantum Nash equilibrium} is a quantum strategy $\rho = \rho_1 \otimes \ldots \otimes \rho_n$ for $\rho_i$'s in $H_i$'s satisfying
\begin{align*}
	\sum_s \bra{s} \rho \ket{s} u_i(s) \ge \sum_s \bra{s} \Phi_i(\rho) \ket{s} u_i(s), \qquad \forall i\in [n], \forall \Phi_i \in CPTP(H_i)
\end{align*}
\end{Def}
\begin{Def} \label{thm:QCNE}
A \emph{quantum correlated equilibrium} is a quantum strategy $\rho$ in $H$ satisfying
\begin{align*}
	\sum_s \bra{s} \rho \ket{s} u_i(s) \ge \sum_s \bra{s} \Phi_i(\rho) \ket{s} u_i(s), \qquad \forall i\in [n], \forall \Phi_i \in CPTP(H_i)
\end{align*}
\end{Def}

\subsection{Relations between Classical and Quantum Equilibria}
This section studies the relation between classical and quantum equilibria. A quantum mixed state $\rho$ naturally induces a classical distribution $p$ over $S$ defined by
\begin{equation}
	p(s) = \rho_{ss}
\end{equation}
While taking diagonal entries seems to be the most natural mapping from quantum states to  classical distributions, there are more options for the mapping in other direction. Given a classical distribution $p$ over $S$, we can consider
\begin{enumerate}
	\item $\rho(p) = \sum_s p(s) \ket{s}\bra{s}$,
	\item $\ket{\psi(p)} = \sum_s \sqrt{p(s)} \ket{s}$, or
	\item any density matrix $\rho$ with $p(s) = \rho_{ss}$ satisfied.
\end{enumerate}
We want to study whether equilibria in one world, classical or quantum, implies equilibria in the other world. The following theorem says that quantum always implies classical.
\begin{Thm}\label{thm: QE to CE}
	If $\rho$ is a quantum correlated equilibrium, then $p$ defined by $p(s) = \rho_{ss}$ is a classical correlated equilibrium. In particular, if $\rho$ is a quantum Nash equilibrium, then $p$ is a classical Nash equilibrium.
\end{Thm}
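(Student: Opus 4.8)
The plan is to produce, for each player $i$ and each ordered pair $(s_i, s_i')$ of strategies in $S_i$, one specific admissible super-operator $\Phi_i \in CPTP(H_i)$ for which the deviation inequality supplied by the quantum correlated equilibrium hypothesis reduces, after taking diagonals, to exactly the classical correlated equilibrium inequality of Definition~\ref{thm:CNE} for that player and that pair. The channel I would use is the ``relabeling'' map that transports whatever weight sits on $\ket{s_i}$ onto $\ket{s_i'}$ and fixes every other computational basis vector: take Kraus operators $\ket{s_i'}\bra{s_i}$ together with $\ket{t}\bra{t}$ for all $t \in S_i \setminus \{s_i\}$. Since $(\ket{s_i'}\bra{s_i})^\dagger(\ket{s_i'}\bra{s_i}) = \ket{s_i}\bra{s_i}$, the sum $\sum_k M_k^\dagger M_k$ telescopes to $\ket{s_i}\bra{s_i} + \sum_{t \ne s_i}\ket{t}\bra{t} = I_{H_i}$, so $\Phi_i$ is completely positive and trace preserving, i.e. a legal strategy for player $i$; we of course only need the case $s_i \ne s_i'$, the case $s_i = s_i'$ being trivial.

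Next I would evaluate the diagonal of $(\Phi_i \otimes I_{-i})(\rho)$ in the computational basis, which is all that the payoff functional $\sigma \mapsto \sum_s \bra{s}\sigma\ket{s}u_i(s)$ sees. Because each Kraus operator acts as the identity on $H_{-i}$, a direct computation shows that the $((a,s_{-i}),(a,s_{-i}))$ entry of $(\Phi_i\otimes I_{-i})(\rho)$ equals $\delta_{a,s_i'}\,\rho_{(s_i,s_{-i}),(s_i,s_{-i})}$ plus $\rho_{(a,s_{-i}),(a,s_{-i})}$ when $a \ne s_i$ (and $0$ when $a = s_i$); in words, writing $p(s) = \rho_{ss}$, the map leaves every diagonal weight of $\rho$ alone except that the weight $p(s_i,s_{-i})$ is moved from $(s_i,s_{-i})$ onto $(s_i',s_{-i})$, for every $s_{-i}$. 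Plugging this into the quantum equilibrium inequality $\sum_s \bra{s}\rho\ket{s}u_i(s) \ge \sum_s \bra{s}\Phi_i(\rho)\ket{s}u_i(s)$, every term involving a strategy of player $i$ other than $s_i$ or $s_i'$ appears identically on both sides and cancels, the $s_i'$-terms partially cancel, and the surviving inequality is precisely $\sum_{s_{-i}} p(s_i,s_{-i})u_i(s_i,s_{-i}) \ge \sum_{s_{-i}} p(s_i,s_{-i})u_i(s_i',s_{-i})$. Quantifying over all $i$ and all $s_i, s_i' \in S_i$ gives that $p$ satisfies Definition~\ref{thm:CNE}, proving the first assertion.

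For the ``in particular'' clause I would argue as follows. By definition a quantum Nash equilibrium is a quantum correlated equilibrium whose state has the product form $\rho = \rho_1 \otimes \cdots \otimes \rho_n$; hence the induced distribution $p(s) = \rho_{ss} = \prod_i (\rho_i)_{s_i s_i}$ is itself a product distribution. By the first part of the theorem $p$ is a classical correlated equilibrium, and by the remark recorded immediately after Definition~\ref{thm:CNE} a correlated equilibrium that is a product distribution is the same thing as a Nash equilibrium. Therefore $p$ is a classical Nash equilibrium.

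The substance of the argument is entirely in the second paragraph: choosing the relabeling channel and checking both that it is CPTP and that it implements a clean ``weight transfer'' on the diagonal, after which the cancellation is purely mechanical. The one point to state with care is conceptual rather than computational --- we are not asserting that every element of $CPTP(H_i)$ is needed or that the converse holds, only that these particular relabeling deviations already force all of the classical inequalities; this is exactly why quantum equilibrium is at least as demanding as classical equilibrium on the diagonal.
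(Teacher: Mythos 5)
Your proposal is correct and is essentially the paper's own argument: your relabeling channel with Kraus operators $\ket{s_i'}\bra{s_i}$ and $\ket{t}\bra{t}$, $t\ne s_i$, is exactly the paper's super-operator $\sum_{t_i\ne s_i}P_{t_i}\rho P_{t_i}+(s_i\leftrightarrow s_i')P_{s_i}\rho P_{s_i}(s_i\leftrightarrow s_i')$, and both proofs then compare $\mu_i(\rho)$ with $\mu_i(\Phi_i(\rho))$ term by term so that the cancellation yields the classical correlated-equilibrium inequality, with the Nash case following because a product $\rho$ induces a product $p$.
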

\proof See the appendix. \endproof

The implication from classical to quantum is much more complicated. The following theorem says that the first mapping always gives a quantum equilibrium. That is, the utility of $i$ cannot be increased for a classical equilibrium even when player $i$ is allowed to have quantum operations.
\begin{Thm}\label{thm: CE to QE}
	If $p$ is a classical correlated equilibrium, then $\rho = \Sigma_{s\in S} \ket{s}\bra{s}$ is a quantum correlated equilibrium. In particular, if $p$ is a classical Nash equilibrium, then $\rho$ as defined is a quantum Nash equilibrium.
\end{Thm}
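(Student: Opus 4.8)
The plan is to reduce a quantum deviation by a single player to a classical \emph{randomized relabeling} of that player's pure strategy, and then invoke the classical correlated-equilibrium inequality. Throughout I write $\rho = \sum_{s\in S} p(s)\ket{s}\bra{s}$ (this is the first of the three mappings displayed above). Since $\rho$ is diagonal in the computational basis, the left-hand side of the quantum correlated-equilibrium condition is immediately
\begin{align*}
\sum_s \bra{s}\rho\ket{s}\, u_i(s) = \sum_s p(s)\, u_i(s).
\end{align*}

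First I would analyze the right-hand side. Fix a player $i$ and an arbitrary $\Phi_i \in CPTP(H_i)$. Expanding $\rho = \sum_{s_i,s_{-i}} p(s_i,s_{-i})\,\ket{s_i}\bra{s_i}\otimes\ket{s_{-i}}\bra{s_{-i}}$ and applying $\Phi_i\otimes I_{-i}$, I only need the computational-basis diagonal of the output. Set $q_{s_i\to t_i} := \bra{t_i}\Phi_i(\ket{s_i}\bra{s_i})\ket{t_i}$. Complete positivity of $\Phi_i$ gives $q_{s_i\to t_i}\ge 0$, and trace preservation gives $\sum_{t_i} q_{s_i\to t_i} = \mathrm{Tr}\,\Phi_i(\ket{s_i}\bra{s_i}) = 1$, so $q$ is a stochastic matrix on $S_i$. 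Using $|\qip{t_{-i}}{s_{-i}}|^2 = \delta_{t_{-i},s_{-i}}$, a direct computation yields
\begin{align*}
\sum_t \bra{t}\Phi_i(\rho)\ket{t}\, u_i(t) = \sum_{s_i,s_{-i}} p(s_i,s_{-i})\sum_{t_i} q_{s_i\to t_i}\, u_i(t_i,s_{-i}).
\end{align*}
In other words, the best a quantum $\Phi_i$ can do is exactly what replacing $s_i$ by a fresh $t_i$ drawn from $q_{s_i\to\cdot}$ (independently of $s_{-i}$) does — any coherences $\Phi_i$ creates are invisible to the payoff, which reads only computational-basis measurement statistics.

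The rest is classical and uses convexity. For each fixed $s_i$ and each $t_i\in S_i$, Definition \ref{thm:CNE} gives $\sum_{s_{-i}} p(s_i,s_{-i})\,u_i(s_i,s_{-i}) \ge \sum_{s_{-i}} p(s_i,s_{-i})\,u_i(t_i,s_{-i})$; averaging these over $t_i$ with weights $q_{s_i\to t_i}$ (which sum to $1$) and then summing over $s_i$ gives $\sum_s p(s)u_i(s) \ge \sum_{s_i,s_{-i}} p(s_i,s_{-i})\sum_{t_i} q_{s_i\to t_i}\, u_i(t_i,s_{-i})$, which together with the two displays above is exactly the quantum inequality. For the ``in particular'' clause: if $p = p_1\times\cdots\times p_n$ is a product distribution, then the induced $\rho$ factorizes as $\rho_1\otimes\cdots\otimes\rho_n$ with $\rho_i = \sum_{s_i} p_i(s_i)\ket{s_i}\bra{s_i}$, and since every Nash equilibrium is in particular a correlated equilibrium, the part just proved shows $\rho$ is a quantum correlated equilibrium; being a product state, $\rho$ is then a quantum Nash equilibrium.

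The one step that needs care is the second one: verifying that the diagonal of $\Phi_i(\rho)$ depends on $\Phi_i$ only through the stochastic matrix $q$ — equivalently, that an optimal quantum deviation can never outperform the best classical channel on $S_i$ — since everything after that is just a convex combination of the hypothesis inequalities. The remaining bookkeeping (tracking the tensor indices, checking that $q$ is stochastic) is routine.
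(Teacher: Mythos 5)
Your proof is correct and is essentially the paper's argument: the paper expands $\Phi_i$ in a Kraus representation and uses $\sum_j\lvert\bra{t_i}A_{ij}\ket{s_i}\rvert^2\ge 0$ together with $\sum_j A_{ij}^*A_{ij}=I_i$, which is exactly your observation that $q_{s_i\to t_i}=\bra{t_i}\Phi_i(\ket{s_i}\bra{s_i})\ket{t_i}$ is a stochastic matrix, and then both proofs conclude by averaging the classical correlated-equilibrium inequalities with these nonnegative weights. Your packaging in terms of the induced classical channel is a bit cleaner but not a different route.
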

\proof See the appendix. \endproof
Following this result, we are able to give an affirmative answer to an important problem, whether quantum Nash equilibria always exist.

\begin{Cor}\label{thm: QNE-exist}
	For all standard game $G$ with finite number of players and strategies, quantum Nash equilibria always exist.
\end{Cor}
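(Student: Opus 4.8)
The plan is to combine Nash's classical existence theorem with Theorem~\ref{thm: CE to QE}. First I would recall that every finite game --- finitely many players, each with a finite strategy set $S_i$ --- admits a mixed Nash equilibrium; this is Nash's theorem, proved via Brouwer's (equivalently Kakutani's) fixed-point theorem, and I would simply invoke it rather than reprove it. So let $p = p_1 \times \cdots \times p_n$ be such an equilibrium, a product of distributions $p_i$ over $S_i$.

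Next I would feed $p$ into Theorem~\ref{thm: CE to QE}. Since a classical Nash equilibrium is in particular a classical correlated equilibrium, the associated diagonal density matrix $\rho = \sum_{s \in S} p(s)\, \ket{s}\bra{s}$ is a quantum correlated equilibrium. The one extra point to check is that this $\rho$ actually has the product form demanded in the definition of a quantum Nash equilibrium: because $p$ factorizes as $p(s) = \prod_i p_i(s_i)$, we obtain $\rho = \rho_1 \otimes \cdots \otimes \rho_n$ with $\rho_i = \sum_{s_i \in S_i} p_i(s_i)\, \ket{s_i}\bra{s_i}$, so $\rho$ is a legitimate quantum strategy of the required tensor-product shape. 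The ``in particular'' clause of Theorem~\ref{thm: CE to QE} then yields exactly that this $\rho$ is a quantum Nash equilibrium, and since the game is finite $\rho$ lives on a finite-dimensional Hilbert space, so everything is well defined.

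I do not expect a genuine obstacle here: the content is carried entirely by the two ingredients above, and the only step needing a line of care is confirming the tensor-product decomposition of $\rho$, which is what distinguishes ``quantum Nash equilibrium'' from the weaker ``quantum correlated equilibrium'' already handed to us by Theorem~\ref{thm: CE to QE}. If a fully self-contained account were desired, one could additionally spell out why finiteness of players and strategies is precisely what licenses the appeal to Nash's theorem; otherwise the corollary follows immediately.
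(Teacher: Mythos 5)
Your proof is correct and follows essentially the same route as the paper: invoke Nash's existence theorem for finite games and push the equilibrium $p$ through Theorem~\ref{thm: CE to QE} via the diagonal embedding $\rho = \sum_s p(s)\ket{s}\bra{s}$. Your additional observation that $\rho$ inherits the tensor-product form $\rho_1 \otimes \cdots \otimes \rho_n$ from the product structure of $p$ is a worthwhile extra line of care that the paper's own proof leaves implicit.
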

\proof See the appendix. \endproof

The second way of inducing a quantum state is interesting: It preserves (uncorrelated) Nash equilibria, but does not preserve correlated Nash equilibria in general.
\begin{Thm}\label{thm: purestate}
There exists a classical correlated equilibrium $p$ with $\ket{\psi} = \sum_{s} \sqrt{p(s)}\ket{s}$ not being a quantum correlated equilibrium. However, if $p$ is a classical Nash equilibrium, then $\ket{\psi} = \sum_{s} \sqrt{p(s)}\ket{s}$ is a quantum Nash equilibrium.
\end{Thm}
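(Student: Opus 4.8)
The plan is to handle the two halves of Theorem~\ref{thm: purestate} separately: first exhibit a concrete small game witnessing that the map $p \mapsto \ket{\psi(p)} = \sum_s \sqrt{p(s)}\ket{s}$ can destroy the correlated-equilibrium property, and then show that on product distributions this map always yields a quantum Nash equilibrium.

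For the first (counterexample) part, I would take a two-player game for which there is a well-known correlated equilibrium that is \emph{not} a product distribution --- the classic ``traffic light'' / chicken-type game, where the correlated equilibrium $p$ puts weight $1/2$ on each of two pure profiles $(a_1,b_1)$ and $(a_2,b_2)$. Then $\ket{\psi(p)} = \tfrac{1}{\sqrt 2}(\ket{a_1 b_1} + \ket{a_2 b_2})$ is an \emph{entangled} pure state. The point is that player $1$ can now apply a local unitary $\Phi_1$ on $H_1$ (for instance, one that rotates $\ket{a_1}$ and $\ket{a_2}$ into a superposition) that creates interference across the two branches; because the off-diagonal coherences of $\ket{\psi}\bra{\psi}$ are nonzero (whereas $\rho(p) = \sum_s p(s)\ket s \bra s$ has none), the resulting diagonal $\bra{s}\Phi_1(\ket\psi\bra\psi)\ket{s}$ can be shifted toward the profile that gives player $1$ a strictly higher payoff. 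I would pick the utilities so that this deviation strictly helps, and verify the (finite, small) inequality by direct computation, thereby contradicting the quantum-correlated-equilibrium condition of Definition~\ref{thm:QCNE}.

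For the second (positive) part, suppose $p = p_1 \times \cdots \times p_n$ is a classical Nash equilibrium, so $\ket{\psi} = \bigotimes_i \ket{\psi_i}$ with $\ket{\psi_i} = \sum_{s_i}\sqrt{p_i(s_i)}\ket{s_i}$, and $\rho = \ket\psi\bra\psi = \bigotimes_i \ket{\psi_i}\bra{\psi_i}$ is a product state. Fix player $i$ and an arbitrary $\Phi_i \in CPTP(H_i)$. Since everything factors across players and $u_i$ is averaged against the other players' fixed marginals, I would reduce to showing $\sum_{s_i}\bra{s_i}(\ket{\psi_i}\bra{\psi_i})\ket{s_i}\,\bar u_i(s_i) \ge \sum_{s_i}\bra{s_i}\Phi_i(\ket{\psi_i}\bra{\psi_i})\ket{s_i}\,\bar u_i(s_i)$, where $\bar u_i(s_i) = \sum_{s_{-i}} p_{-i}(s_{-i})u_i(s_i,s_{-i})$. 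The left side is exactly $\sum_{s_i} p_i(s_i)\bar u_i(s_i)$, i.e.\ player $i$'s equilibrium payoff. The right side is $\sum_{s_i} q(s_i)\bar u_i(s_i)$ where $q(s_i) = \bra{s_i}\Phi_i(\ket{\psi_i}\bra{\psi_i})\ket{s_i}$ is a probability distribution on $S_i$ --- hence a convex combination of the values $\bar u_i(s_i)$. The Nash condition says $\bar u_i(s_i) \le \sum_{s_i'} p_i(s_i')\bar u_i(s_i')$ for \emph{every} $s_i$ (taking $s_i' $ in the support of $p_i$ gives equality, and any pure strategy is bounded by the equilibrium value), so any convex combination $\sum_{s_i} q(s_i)\bar u_i(s_i)$ is also at most the equilibrium payoff. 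This gives the desired inequality. I would phrase this cleanly by noting it is really the same averaging argument used in Theorem~\ref{thm: CE to QE}, just applied with the input state $\ket{\psi_i}\bra{\psi_i}$ in place of the diagonal state, and observing that the argument there only used that the input was a valid density matrix with the right diagonal --- which $\ket{\psi_i}\bra{\psi_i}$ is.

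The main obstacle is the counterexample: I need to choose a game, a correlated equilibrium $p$ supported on at least two profiles with a genuinely entangling $\ket{\psi(p)}$, and a \emph{local} operation $\Phi_1$ together with utilities making the deviation strictly profitable, and then check the numbers. The positive direction is essentially immediate once the product structure is exploited; the delicate point there is only to make sure the ``convexity'' step uses the Nash inequality in the correct direction for \emph{all} pure strategies $s_i$ (not just those in the support), which follows directly from Definition~3.3 of mixed Nash equilibrium.
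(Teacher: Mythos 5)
The second half of your proposal is correct, and in fact cleaner than the paper's argument: the paper proves the positive statement by expanding $\Phi_i$ in a Kraus representation on the full state $\sum_{a,b}\sqrt{p(a)p(b)}\ket{a}\bra{b}$ and pushing through a long chain of identities using the Nash condition, whereas you exploit the product structure $\ket{\psi}=\otimes_i\ket{\psi_i}$ so that $\Phi_i(\rho)=\Phi_i(\ket{\psi_i}\bra{\psi_i})\otimes\rho_{-i}$, reduce to the distribution $q(s_i)=\bra{s_i}\Phi_i(\ket{\psi_i}\bra{\psi_i})\ket{s_i}$, and finish by convexity since every pure strategy satisfies $\bar u_i(s_i)\le v_i$. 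That reduction is valid and handles the support issue correctly. One aside is wrong, though: the proof of Theorem \ref{thm: CE to QE} does \emph{not} work for an arbitrary density matrix with the right diagonal (it uses diagonality of $\rho$ to kill cross terms, and Theorem \ref{thm: examples} shows the general statement is false), so you should drop the remark that it ``only used that the input was a valid density matrix with the right diagonal''; your self-contained convexity argument is what carries the proof.

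The first half has a genuine gap: the counterexample family you propose provably cannot work. In the traffic-light correlated equilibrium of a chicken-type game, $p$ puts weight $1/2$ on two profiles $(a_1,b_1)$, $(a_2,b_2)$ in which both coordinates differ, and each such profile is then forced by the correlated-equilibrium condition to be a pure Nash equilibrium; so conditional on player $2$'s realized action, player $1$ is already best-responding with probability $1$. Since $\Phi_1\otimes I$ is trace-preserving, it cannot change the diagonal of the reduced state on $H_2$, i.e.\ the outcome distribution of player $2$ is invariant under any local deviation, and hence $\mu_1(\Phi_1(\rho))\le\sum_{s_2}p_2(s_2)\max_{s_1}u_1(s_1,s_2)$, which is exactly $\mu_1(\rho)$ here. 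So $\ket{\psi}=\tfrac{1}{\sqrt2}(\ket{a_1b_1}+\ket{a_2b_2})$ \emph{is} a quantum correlated equilibrium — the ``interference shifts weight toward the better profile'' intuition fails precisely because the opponent's marginal is pinned; this is the same phenomenon as the paper's Theorem \ref{thm: examples2}. (If instead the two profiles share a coordinate, $\ket{\psi}$ is a product state and $p$ is a product distribution, hence a Nash equilibrium, and your own second half shows it is a quantum Nash equilibrium.) A working counterexample needs a correlated equilibrium in which some player is \emph{not} conditionally best-responding in every branch, i.e.\ $\mu_i(\rho)<\sum_{s_{-i}}p_{-i}(s_{-i})\max_{s_i}u_i(s_i,s_{-i})$; the paper uses such an example (utilities $270,126,0,270$ with $p$ having diagonal entries $1/3$ and off-diagonal entries $1/6$) and exhibits an explicit local unitary raising player $1$'s payoff from $201$ to $206$. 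Since you also left the utilities and the verification unspecified, this half of the proposal is both incomplete and, as described, unrepairable within the proposed family.
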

\proof See the appendix. \endproof
Finally, for the third mapping, \ie a general $\rho$ with $p(s) = \rho_{ss}$ satisfied, the equilibrium property can be heavily destroyed, even if $p$ is uncorrelated. (Actually, we will show such counterexamples even for two-player symmetric games.)
\begin{Thm}\label{thm: examples}
	There exist $\rho$ and $p$ satisfying that $p(s)= \rho_{ss}$, $p$ is a classical Nash equilibrium, but $\rho$ is not even a quantum \emph{correlated} equilibrium.
\end{Thm}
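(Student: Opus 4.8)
The plan is to exhibit an explicit counterexample: a two-player symmetric game, a classical (product) Nash equilibrium $p$, and a density matrix $\rho$ with $\rho_{ss}=p(s)$ whose off-diagonal entries let one player strictly raise her utility by a local unitary. The observation that makes this clean is that $\mu_i(\rho)=\sum_s\bra{s}\rho\ket{s} u_i(s)$ depends only on the diagonal of $\rho$, so the utility ``at'' $\rho$ automatically coincides with the classical payoff at $p$; the whole content is to display a profitable quantum deviation $\Phi_i$, and that is where the coherences of $\rho$ are spent.

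Concretely, I would take $S_1=S_2=\{0,1\}$ with the coordination (common-payoff) game $u_1(i,j)=u_2(i,j)=1$ if $i=j$ and $0$ otherwise, and let $p=p_1\times p_2$ with $p_1=p_2=(1/2,1/2)$ uniform. This $p$ is a classical Nash equilibrium: against a uniform opponent every pure strategy of player $i$ yields exactly $1/2$, so player $i$ is indifferent and cannot improve. For the quantum state take the pure state
\[
\ket{\psi}=\tfrac12\big(\ket{00}+\ket{01}+\ket{10}-\ket{11}\big)=\tfrac1{\sqrt2}\big(\ket0\otimes\tfrac1{\sqrt2}(\ket0+\ket1)+\ket1\otimes\tfrac1{\sqrt2}(\ket0-\ket1)\big),
\]
and $\rho=\ket{\psi}\bra{\psi}$. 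Then $\rho_{ss}=|\langle s|\psi\rangle|^2=1/4=p(s)$ for every $s$, so the diagonal condition is met.

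It then remains to check that $\rho$ violates even the quantum correlated equilibrium inequality. Let $\Phi_1$ be conjugation by the Hadamard gate on player $1$'s qubit, a legitimate element of $CPTP(H_1)$. A short computation shows that applying $\Phi_1$ to $\ket{\psi}$ produces the Bell state $\tfrac1{\sqrt2}(\ket{00}+\ket{11})$, so the measured outcome is $(0,0)$ or $(1,1)$, each with probability $1/2$, and player $1$'s utility jumps to $\sum_s\bra{s}\Phi_1(\rho)\ket{s} u_1(s)=1>\tfrac12=\sum_s\bra{s}\rho\ket{s} u_1(s)$. Hence the inequality of Definition~\ref{thm:QCNE} fails for $i=1$, so $\rho$ is not a quantum correlated equilibrium (and a fortiori not a quantum Nash equilibrium), while $p$ is a classical Nash equilibrium --- exactly the assertion of the theorem.

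There is no deep obstacle here; the one point one must get right is that $\ket{\psi}$ has to be \emph{entangled}, which is also why this does not conflict with Theorem~\ref{thm: purestate}. Indeed $\ket{\psi}$ is just $\ket{\psi(p)}=\tfrac12(\ket{00}+\ket{01}+\ket{10}+\ket{11})$ with the sign of the $\ket{11}$ amplitude flipped, turning a product state into a maximally entangled one with the same (uniform) diagonal; a separable $\rho$ with uniform marginals would give player $1$ no gain in this game. Classically a local move of player $1$ is only a stochastic map on her marginal and cannot break her indifference, whereas the Hadamard converts the hidden entanglement into perfect correlation of the two outcomes, which is precisely what the coordination game rewards. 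If one wants a symmetric example that is not common-payoff, the same argument runs verbatim for any symmetric $2\times2$ game whose uniform profile is a Nash equilibrium and in which correlated outcomes outscore independent ones; the coordination game is simply the most transparent instance.
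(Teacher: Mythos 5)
Your proposal is correct and matches the paper's own construction: the paper's Example 3 uses the same state $\frac12(\ket{00}+\ket{01}+\ket{10}-\ket{11})$ (equivalently the same $\rho$), the same Hadamard deviation by player 1 producing the Bell state, and a coordination game that is just an affine shift of yours, with the uniform product distribution as the classical Nash equilibrium. So this is essentially the same proof, verified correctly.
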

\proof See the appendix. \endproof
Despite of the above fact, one should not think that the large range of the third type of mappings always enables some mapping to destroy the equilibria.
\begin{Thm}\label{thm: examples2}
	There exist classical correlated equilibria $p$, such that all quantum states $\rho$ with $\rho_{ss} = p(s)$ are quantum correlated equilibria.
\end{Thm}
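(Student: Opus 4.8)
The plan is to exploit the fact that the left-hand side of the quantum correlated equilibrium inequality, $\sum_s\bra{s}\rho\ket{s}u_i(s)=\sum_s\rho_{ss}u_i(s)$, depends on $\rho$ only through its diagonal, hence only through $p$. Thus it suffices to exhibit a classical correlated equilibrium $p$ for which, for \emph{every} density matrix $\rho$ with $\rho_{ss}=p(s)$, every player $i$, and every $\Phi_i\in CPTP(H_i)$, one has $\sum_s\bra{s}\Phi_i(\rho)\ket{s}u_i(s)\le\sum_s p(s)u_i(s)$. I would present two such families, one trivial and one slightly more substantial.

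First, the easy case: any game possessing a \emph{pure-strategy} correlated equilibrium works, e.g.\ a game in which each player has a strictly dominant strategy (the Prisoner's Dilemma is the standard example), whose correlated equilibrium $p=\delta_{s^\ast}$ is a point mass on a single profile $s^\ast$. A density matrix $\rho$ with $\rho_{s^\ast s^\ast}=1$ is forced to equal $\ket{s^\ast}\bra{s^\ast}$: positivity together with $\mathrm{Tr}\,\rho=1$ kills every other diagonal entry, and the $2\times2$ principal minors then kill every off-diagonal entry. So there is only one candidate $\rho$, and it is a quantum correlated equilibrium by Theorem~\ref{thm: CE to QE}.

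Second, to obtain examples with $p$ genuinely mixed and even correlated, I would take a ``dummy'' game in which each utility $u_i$ depends only on player $i$'s own strategy $s_i$, allowing ties among the maximizers of $u_i$ so that correlated equilibria need not be pure. For such a game the classical correlated equilibrium condition (Definition~\ref{thm:CNE}) reduces to $p_i(s_i)u_i(s_i)\ge p_i(s_i)u_i(s_i')$, hence forces each marginal $p_i(s_i)=\sum_{s_{-i}}p(s_i,s_{-i})$ to be supported on $\arg\max_{s_i}u_i(s_i)$; thus the left-hand side equals $\max_{s_i}u_i(s_i)$ for every such $p$. For the right-hand side, the key identity is that partial trace commutes with $\Phi_i\otimes I_{-i}$: writing $\rho_i=\mathrm{Tr}_{-i}\,\rho$,
\begin{align*}
\sum_s\bra{s}\Phi_i(\rho)\ket{s}u_i(s)=\sum_{s_i}\bra{s_i}\,\mathrm{Tr}_{-i}\big(\Phi_i(\rho)\big)\,\ket{s_i}\,u_i(s_i)=\sum_{s_i}\bra{s_i}\Phi_i(\rho_i)\ket{s_i}\,u_i(s_i).
\end{align*}
Since $\Phi_i(\rho_i)$ is again a density matrix, its diagonal is a probability distribution on $S_i$, so this is at most $\max_{s_i}u_i(s_i)$, matching the left-hand side. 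Hence every $\rho$ with $\rho_{ss}=p(s)$ is a quantum correlated equilibrium, and such $p$ can be taken correlated (any correlation over the tied maximizers).

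The technical content is light: the only things to verify are that a unit diagonal entry forces $\rho=\ket{s^\ast}\bra{s^\ast}$, the characterization of classical correlated equilibria of a dummy game, and the commutation of partial trace with a local super-operator. I expect the only real difficulty to be in the ``design'' step — choosing the family of examples so that whatever off-diagonal coherences a $\rho$ with diagonal $p$ may carry, a local quantum deviation $\Phi_i$ cannot convert them into extra payoff. Making each player's payoff insensitive to the others' strategies (or collapsing $p$ to a point mass) is exactly what removes that possibility.
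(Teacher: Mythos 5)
Your proposal is correct, and the engine of your second (dummy-game) family is the same one the paper uses, just dressed up more than necessary. The paper's example is the coordination game $u_1=u_2=\left[\begin{smallmatrix}2&1\\1&2\end{smallmatrix}\right]$ with $p$ putting weight $1/2$ on each matching profile: every player's expected payoff under any $\rho$ with diagonal $p$ is already $2=\max_s u_i(s)$, and since the diagonal of $\Phi_i(\rho)$ is itself a probability distribution over $S$, no deviation can push the payoff above that ceiling --- a one-line bound. Your dummy-game construction is the same ``payoff already at the ceiling'' argument; note that your partial-trace identity $\mathrm{Tr}_{-i}\big((\Phi_i\otimes I_{-i})(\rho)\big)=\Phi_i(\rho_i)$, while correct, is not needed, because the cruder bound $\sum_s\bra{s}\Phi_i(\rho)\ket{s}u_i(s)\le\max_s u_i(s)$ already suffices (exactly as in the paper). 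One mild advantage of the paper's example is that the ceiling is hit by a genuinely correlated $p$ in a game where $u_i$ depends on both players' strategies, whereas your dummy games make the insensitivity to opponents' play explicit by construction; both are legitimate. Your first family (a point mass $\delta_{s^\ast}$ at a dominant-strategy profile) is a genuinely different route --- a rigidity argument showing positivity plus unit trace forces $\rho=\ket{s^\ast}\bra{s^\ast}$, after which Theorem \ref{thm: CE to QE} applies --- and it is sound, though degenerate in that the set of admissible $\rho$ collapses to a single state, so it demonstrates the theorem with no off-diagonal freedom to control.
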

\proof See the appendix. \endproof

\section{A Necessary and Sufficient Condition}

In this section, we prove the following theorem.

\begin{Thm}\label{thm: QNE-PPAD}
${\bf PPAD} \subseteq {\bf BQP}$ if and only if there exists a polynomial-time quantum algorithm for finding a quantum Nash equilibrium.	
\end{Thm}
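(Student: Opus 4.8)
The plan is to prove the equivalence by establishing two directions, each exploiting the fact proved earlier (Theorem~\ref{thm: CE to QE} and Theorem~\ref{thm: QE to CE}) that classical Nash equilibria and diagonal quantum Nash equilibria are in bijective correspondence. For the ``only if'' direction, I would assume ${\bf PPAD} \subseteq {\bf BQP}$ and build a polynomial-time quantum algorithm that, on input a game $G$ (given explicitly, with payoffs as rationals), computes a quantum Nash equilibrium. The key observation is that \emph{NASH} --- the problem of finding an (approximate) Nash equilibrium --- is ${\bf PPAD}$-complete \cite{DGP2009, CDT2009}, hence lies in ${\bf PPAD}$; but the class inclusion is stated for languages/decision problems, so the first careful step is to upgrade the assumed inclusion to a statement about search problems. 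I would argue that if ${\bf PPAD} \subseteq {\bf BQP}$ then every ${\bf PPAD}$ search problem admits a polynomial-time bounded-error quantum algorithm producing a witness, by the standard search-to-decision reduction available within ${\bf PPAD}$ (computing the witness bit-by-bit via queries to the decision oracle ``is there a solution with the following prefix?'', each such query being itself a ${\bf PPAD}$ predicate because END-OF-THE-LINE is self-reducible, and amplifying the success probability so that the union bound over all bits still yields constant success). Given a classical approximate Nash equilibrium $p$, Theorem~\ref{thm: CE to QE} says $\rho = \sum_{s\in S} p(s)\ket{s}\bra{s}$ is a quantum Nash equilibrium (the approximate version goes through verbatim since the inequalities are linear in $\rho$), so we output $\rho$.

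For the ``if'' direction, I would assume a polynomial-time quantum algorithm $\mathcal{A}$ that finds a quantum Nash equilibrium and show ${\bf PPAD} \subseteq {\bf BQP}$. It suffices to show END-OF-THE-LINE is in ${\bf BQP}$, since ${\bf PPAD}$ problems reduce to it in polynomial time and ${\bf BQP}$ is closed under polynomial-time (Turing, or at least many-one) reductions. The route is through the ${\bf PPAD}$-completeness of \emph{NASH}: the reduction of \cite{DGP2009} from END-OF-THE-LINE to \emph{NASH} is polynomial-time computable and has the property that from any sufficiently good approximate Nash equilibrium of the produced game one can recover, in polynomial time, a solution to the END-OF-THE-LINE instance. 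So: run $\mathcal{A}$ on the game to get a quantum Nash equilibrium $\rho$; measure $\rho$ in the computational basis to obtain a sample $s$ from the induced distribution $p(s) = \rho_{ss}$; by Theorem~\ref{thm: QE to CE}, $p$ is a classical Nash equilibrium; repeat to estimate $p$ (or the relevant marginals) to inverse-polynomial accuracy with high probability via Chernoff bounds; feed the estimated equilibrium through the recovery map of the ${\bf PPAD}$-completeness reduction to extract an END-OF-THE-LINE solution; and finally, since END-OF-THE-LINE solutions are polynomial-time verifiable, check the candidate and output the corresponding decision bit. Constant success probability is maintained because each randomized/quantum step can be amplified and there are only polynomially many of them.

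The main obstacle, and the place I would spend the most care, is the interface between \emph{decision} classes and \emph{search} problems, together with the approximation slack. ${\bf BQP}$ and ${\bf PPAD}$ as defined in the excerpt are classes of languages / total search problems respectively, so ``${\bf PPAD} \subseteq {\bf BQP}$'' must really be read as ``every ${\bf PPAD}$ search problem is solvable by a polynomial-time bounded-error quantum algorithm'' (equivalently, the functional version $\mathbf{FBQP}$ contains the functional version of ${\bf PPAD}$); making this reading precise and justifying the search-to-decision passage in both directions is the technical heart. A secondary subtlety is that \emph{NASH} is only known to be ${\bf PPAD}$-complete in its $\epsilon$-approximate form with $\epsilon$ inverse-polynomial (indeed inverse-exponential precision makes the exact problem harder), so throughout I must work with $\epsilon$-approximate quantum Nash equilibria, note that Theorems~\ref{thm: QE to CE} and~\ref{thm: CE to QE} extend to the approximate setting (the defining inequalities are linear, so an additive $\epsilon$ on one side transfers to the other), and verify that the sampling error in estimating $p$ from $\rho$ can be driven below the $\epsilon$ tolerance with polynomially many measurements. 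Once these bookkeeping issues are handled, both implications are essentially immediate from the completeness of \emph{NASH} and the two correspondence theorems.
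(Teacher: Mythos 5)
Your proposal is correct and takes essentially the same route as the paper: the forward direction combines the \textbf{PPAD}-completeness of \emph{NASH} with the mapping of Theorem~\ref{thm: CE to QE}, and the reverse direction measures the quantum equilibrium, invokes Theorem~\ref{thm: QE to CE}, estimates the induced classical equilibrium by repeated sampling with Chernoff bounds, and appeals to the \textbf{PPAD}-completeness of inverse-polynomially approximate \emph{NASH} \cite{CDT2009} --- precisely the paper's \textit{SAMPLE-NASH} argument via Lemmas~\ref{sample-PPAD}--\ref{Lemma 2}. The only extraneous element is your prefix-query search-to-decision reduction (whose prefix predicate is not obviously a \textbf{PPAD} problem), but it is unnecessary once the inclusion is read functionally, which is the reading both you and the paper ultimately adopt.
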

\begin{proof}

If ${\bf PPAD} \subseteq {\bf BQP}$, then for a game $G$, there exists a polynomial-time quantum algorithm for finding a Nash equilibrium $p$, since finding a Nash equilibrium is a ${\bf PPAD}$-complete problem. As shown in the proof of Corollary \ref{thm: QNE-exist}, we can always convert $p$ to a quantum equilibrium $\rho$ in polynomial time. Hence, there exists a polynomial-time quantum algorithm for finding a quantum Nash equilibrium.	

Next we will prove the inverse direction for the statement.

We define a new problem as follows.

\begin{Def} \label{Sample-Nash}
    \textit{SAMPLE-NASH} is a search problem that, on input a game $G$, outputs a pure strategy $s$ sampled from a fixed Nash equilibrium $p$ of the game $G$.
\end{Def}

Suppose that we are given a game $G$ and we find a quantum Nash equilibrium $\rho$ in polynomial-time using the quantum algorithm. Here we assume that the induced classical probability distribution induced from $\rho$ is $p$, defined by $p(s)=\rho_{ss}$. By measuring $\rho$ according to the computational basis $\{s: s\in S\}$, we can obtain a pure strategy $s$, which is sampled according to $p$. According to Theorem \ref{thm: QE to CE}, $p$ is a classical Nash equilibrium, and therefore $s$ is an output for \textit{SAMPLE-NASH}. So we are able to obtain the output for \textit{SAMPLE-NASH} in polynomial time. Now we have a polynomial-time quantum algorithm for \textit{SAMPLE-NASH}.

We have the following result, which is to be proved later.

\begin{Lem} \label{sample-PPAD}
A ${\bf PPAD}$-complete problem can be reduced to \textit{SAMPLE-NASH} in randomized polynomial time.
\end{Lem}

Therefore, we have a polynomial-time quantum algorithm for a ${\bf PPAD}$-complete problem, implying ${\bf PPAD} \subseteq {\bf BQP}$.

\end{proof}

\subsection{Proof of Lemma \ref{sample-PPAD}}

To prove Lemma \ref{sample-PPAD}, we use the following result.

\begin{Lem} \cite{CDT2009} \label{approx-PPAD}\\
    For any constant $c>0$, the problem of computing a $1/m^c$-approximate Nash equilibrium of a positively normalized \footnote{In \cite{CDT2009}, the game matrices are normalized in the sense that all the entries are between $0$ and $1$ (positively normalized), or between $-1$ and $1$. } $m\times m$ bimatrix game is {\bf PPAD}-complete.
\end{Lem}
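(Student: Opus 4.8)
The plan is to prove both directions of the {\bf PPAD}-completeness claim: membership of the search problem ``given a positively normalized $m\times m$ bimatrix game, output a $1/m^c$-approximate Nash equilibrium'' in {\bf PPAD}, and {\bf PPAD}-hardness of that problem.

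For membership, I would use Nash's own Brouwer-function argument in quantitative form. Given mixed strategies $(x,y)$ in the product of the two strategy simplices $\Delta$, let $F(x,y)$ be the classical Nash map: the weight on each pure strategy is increased by the positive part of its gain over the current expected payoff and the result is renormalized. Since the game is positively normalized, $F$ is polynomial-time computable and Lipschitz with a polynomially bounded constant, and one checks that a $\delta$-approximate fixpoint of $F$, for a suitable $\delta = 1/\text{poly}(m)$, is a $1/m^c$-approximate Nash equilibrium. Thus the problem reduces to computing an approximate Brouwer fixpoint of a Lipschitz map on $\Delta$, and that problem lies in {\bf PPAD}: triangulate $\Delta$ finely, label each vertex by a coordinate direction in which $F$ displaces it (a Sperner-type labeling), follow the induced path of fully-labeled simplices, and orient the path by the sign of the relevant determinant to obtain a genuine END-OF-THE-LINE instance. (Equivalently, one may encode the Lemke--Howson complementary-pivot path of the bimatrix linear complementarity problem directly as END-OF-THE-LINE.)

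For hardness, I would reduce from a canonical {\bf PPAD}-complete ``generalized circuit'' problem $\epsilon$-GCIRCUIT coming from the Daskalakis--Goldberg--Papadimitriou line of work: one is given a circuit whose gates act on real registers with values in $[0,1]$ and compute constants, copies, additions and subtractions truncated to $[0,1]$, multiplication by a constant, comparisons, Boolean combinations of $0/1$ values, and products of two registers, and one must find a register assignment satisfying every gate up to additive error $\epsilon = 1/\text{poly}$. The reduction builds a bimatrix game in which disjoint blocks of the row player's and column player's pure strategies play the role of the registers, the probability mass a player places inside a block encoding that register's value, and each gate becomes a small gadget sub-game whose best-response structure, at any approximate equilibrium, forces the gadget's output register to equal the prescribed function of its input registers up to a controlled error. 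Gates that are affine in the registers (copy, scaling, addition, subtraction, comparison, the Boolean gates) translate directly, since a best response in a bimatrix game maximizes a payoff linear in the opponent's mixed strategy.

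The main obstacle, and the technical heart of Chen--Deng--Teng, is the product gate: a best response is linear, not bilinear, in the opponent's distribution, so no single two-player gadget outputs the product of two probabilities, which is exactly why the earlier constructions needed a third or fourth player to realize a trilinear payoff form. I would resolve this with the CDT amplification device: represent each value in a redundant, averaged form over many pure strategies and implement a product by a cascade of affine gadgets (repeated halving, comparison, and recombination) whose lengths and block sizes are taken polynomially large in $m$, so the per-gadget error stays within the $\epsilon$ budget and does not accumulate across the circuit. One then verifies that after a uniform rescaling all payoff entries lie in $[0,1]$, so the constructed game is positively normalized, and that reading the block masses of any $1/m^c$-approximate Nash equilibrium off the designated registers yields an assignment satisfying all gates of the $\epsilon$-GCIRCUIT instance. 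The bulk of the work is precisely this error-propagation accounting, showing that $1/m^c$ precision in the game guarantees $1/\text{poly}$ precision in every register simultaneously, together with the verification that each gadget behaves as claimed at every approximate equilibrium.
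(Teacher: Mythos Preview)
The paper does not prove this lemma at all: it is quoted verbatim from Chen--Deng--Teng \cite{CDT2009} and invoked as a black box in the reduction that follows. There is therefore nothing in the paper to compare your argument against; you have instead sketched the content of the cited reference.

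As a summary of \cite{CDT2009}, your outline is accurate on the membership side (Brouwer/Sperner or Lemke--Howson paths encoded as END-OF-THE-LINE) and on the overall hardness architecture (reduce a {\bf PPAD}-complete generalized-circuit problem to a bimatrix game via probability-mass gadgets for each gate). One caveat concerns your handling of the product gate. The device you describe---synthesizing a bilinear map from a cascade of affine gadgets via ``repeated halving, comparison, and recombination''---cannot work as stated, since any composition of gadgets each of whose output is affine in the opponent's mixed strategy remains affine, never bilinear, in two independent registers. Chen--Deng--Teng do not simulate real-valued multiplication at all; their route is to start from a high-dimensional approximate-Brouwer instance whose defining function is computed by a Boolean circuit, so that the resulting generalized circuit needs only the gate types that \emph{are} affine in the relevant sense (constants, copy, scaled sum/difference truncated to $[0,1]$, comparison, and Boolean operations on $\{0,1\}$-valued registers), together with a player-averaging scheme that controls how errors propagate. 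If you want a self-contained proof rather than a citation, that is the missing ingredient you would need to supply.
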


We just need to reduce the problem in Lemma \ref{approx-PPAD}, to \textit{SAMPLE-NASH} in randomized polynomial time.

For an instance of the problem in Lemma \ref{approx-PPAD}, namely a positively normalized $m\times m$ bimatrix game $G$, we use $G$ as the input for \textit{SAMPLE-NASH}. We assume that we have an algorithm $A$ for \textit{SAMPLE-NASH}, and we want to use $A$ to construct an algorithm for the problem in Lemma \ref{approx-PPAD} in randomized polynomial time. Suppose the output of $A$ is sampled from a Nash equilibrium $p=p_1\times p_2$.

\begin{Lem} \label{Lemma 1}
Suppose that $p = p_1 \times p_2$ is a Nash equilibrium of a positively normalized $m\times m$ bimatrix game $G$, and that the output of A, an algorithm for \textit{SAMPLE-NASH}, is sampled from $p$. For any $\epsilon = O(1/m^c)$, with high probability, we will get a probability distribution $q = q_1\times q_2$ with $||q_1-p_1||_1\le \epsilon$ and $||q_2-p_2||_1\le \epsilon$, after running A for $O(m^2\epsilon^{-2})$ times.
\end{Lem}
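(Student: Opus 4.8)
The plan is a textbook ``learn the distribution by sampling'' argument, adapted so that the sample count stays $O(m^2\epsilon^{-2})$ and the conclusion holds with high probability rather than merely in expectation. Concretely, I would run $A$ for $N=\Theta(m^2\epsilon^{-2})$ rounds, obtaining i.i.d.\ samples $(a^{(j)},b^{(j)})\in[m]\times[m]$ drawn from $p=p_1\times p_2$; define the empirical marginals $q_1(a)=\tfrac1N|\{j:a^{(j)}=a\}|$ and $q_2(b)=\tfrac1N|\{j:b^{(j)}=b\}|$, and output $q=q_1\times q_2$. Since $q$ is a product distribution by construction, the only thing left is to show $\|q_1-p_1\|_1\le\epsilon$ and, by the same argument, $\|q_2-p_2\|_1\le\epsilon$, each with high probability.

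First I would bound the expectation. For each $a$, $Nq_1(a)$ is a binomial random variable with mean $Np_1(a)$ and variance $Np_1(a)(1-p_1(a))\le N/4$, so $\av[(q_1(a)-p_1(a))^2]\le 1/(4N)$, hence $\av|q_1(a)-p_1(a)|\le 1/(2\sqrt N)$ by Jensen's inequality. Summing over the $m$ coordinates gives $\av\|q_1-p_1\|_1\le m/(2\sqrt N)$, which is at most $\epsilon/2$ once the hidden constant in $N=\Theta(m^2\epsilon^{-2})$ is fixed. (A slightly less wasteful route via $\|v\|_1\le\sqrt m\,\|v\|_2$ together with $\av\|q_1-p_1\|_2^2\le 1/N$ yields the same thing, but the crude per-coordinate bound already delivers the advertised sample count.)

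Next I would upgrade this to a high-probability statement using McDiarmid's bounded-difference inequality: viewed as a function of the $N$ independent samples $a^{(1)},\dots,a^{(N)}$, the quantity $\|q_1-p_1\|_1$ changes by at most $2/N$ when one sample is altered, because exactly two entries of $q_1$ shift, each by $1/N$. Therefore $\pr\!\left[\|q_1-p_1\|_1\ge \av\|q_1-p_1\|_1+\epsilon/2\right]\le \exp(-\epsilon^2 N/8)=\exp(-\Omega(m^2))$, so $\|q_1-p_1\|_1\le\epsilon$ except with exponentially small probability; a union bound over the two players keeps the total failure probability tiny. Finally, since $\epsilon=O(1/m^c)$, the count $N=O(m^{2+2c})$ is polynomial, so $A$ is invoked only polynomially many times and the whole procedure runs in randomized polynomial time.

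The only genuinely delicate point is the interplay between the sample count and the error probability: the naive alternative — estimating each of the $2m$ coordinates separately to additive accuracy $\epsilon/m$ by Hoeffding and union-bounding — would cost an extra $\Theta(\log m)$ factor in $N$, so the McDiarmid (or equivalently a second-moment-plus-Markov) step is what lets us stay at $O(m^2\epsilon^{-2})$ while still getting ``with high probability.'' Everything else, namely the binomial moment computation and the union bound, is routine.
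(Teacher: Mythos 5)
Your proof is correct, but it takes a genuinely different route from the paper's. The paper estimates each of the $m$ coordinates of $p_1$ separately: with $k=\lceil 4000\,m^2/\epsilon^2\rceil$ samples it applies a Chernoff--Hoeffding bound to each empirical frequency $X_j$ at accuracy $\epsilon_j=\epsilon/(2m)$, so each coordinate fails with probability at most $2e^{-2000}$, and then union-bounds over the $m$ coordinates (and the two players) to get $\|q_1-p_1\|_1\le\sum_j\epsilon_j\le\epsilon$ with probability at least $0.99$. You instead control the whole $\ell_1$ error at once: a second-moment/Jensen computation gives $\av\|q_1-p_1\|_1\le m/(2\sqrt N)\le\epsilon/2$, and McDiarmid's bounded-difference inequality (sensitivity $2/N$ per sample) upgrades this to $\|q_1-p_1\|_1\le\epsilon$ except with probability $\exp(-\epsilon^2N/8)=\exp(-\Omega(m^2))$; both computations (the variance bound, the $2/N$ sensitivity, and the exponent $\epsilon^2N/8$) check out, and both routes give the stated $O(m^2\epsilon^{-2})=O(m^{2c+2})$ sample count. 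What your route buys is a failure probability that is uniformly exponentially small in $m$: the paper's union bound leaves a residual factor of $m$ (its failure bound is about $4m\,e^{-2000}$, a ``constant'' only because the constant $4000$ is enormous; to hold literally for all $m$ the per-coordinate approach needs the extra $\Theta(\log m)$ factor you point out). What the paper's route buys is elementary bookkeeping with explicit constants --- just a per-coordinate Chernoff bound and a union bound, no bounded-difference machinery.
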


\begin{Lem} \label{Lemma 2}
     Suppose that $p = p_1 \times p_2$ is a Nash equilibrium of a positively normalized $m\times m$ bimatrix game $G$. Any probability distribution $q = q_1 \times q_2$ with $||q_1-p_1||_1\le \epsilon$ and $||q_2-p_2||_1\le \epsilon$, is a $2\epsilon$-approximate Nash equilibrium of game $G$.
\end{Lem}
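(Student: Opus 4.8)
The plan is to prove that the perturbed profile $q=q_1\times q_2$ has only a small deviation incentive by comparing it with the exact equilibrium $p=p_1\times p_2$, whose deviation incentive is zero. Write the bimatrix game as two payoff matrices $R$ (row player) and $C$ (column player), both with entries in $[0,1]$ since the game is positively normalized, so that the row player's expected payoff on a profile $(x,y)$ of mixed strategies is $x^{T}Ry$ and the column player's is $x^{T}Cy$. It suffices to bound, for each player separately, the gain obtainable by switching to an arbitrary mixed strategy while the opponent stays at $q$; I will carry this out for the row player and invoke the symmetric argument (with $C$ in place of $R$) for the column player.

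The one ingredient worth isolating is an elementary perturbation estimate: if $v\in\mathbb{R}^{m}$ has coordinates summing to $0$ and $\|v\|_1\le\epsilon$, and $w\in[0,1]^{m}$, then $|w^{T}v|\le \epsilon/2$. This holds because $w^{T}v=(w-\tfrac12\mathbf{1})^{T}v$ (using that the coordinates of $v$ sum to zero, $\mathbf{1}$ being the all-ones vector), while $\|w-\tfrac12\mathbf{1}\|_{\infty}\le\tfrac12$, so $|w^{T}v|\le\|w-\tfrac12\mathbf{1}\|_{\infty}\|v\|_1\le\tfrac12\epsilon$. I will apply this with $v=q_1-p_1$ or $v=q_2-p_2$ — each a difference of probability distributions, hence summing to $0$ with $\ell_1$-norm at most $\epsilon$ — and with $w$ one of the vectors $R^{T}x$, $Rq_2$, $R^{T}p_1$, all of which have entries in $[0,1]$ because $R$ does and $x,q_2,p_1$ are distributions.

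With this in hand the calculation is short. For any mixed strategy $x$ of the row player: (i) $x^{T}Rq_2\le x^{T}Rp_2+\epsilon/2$, by the estimate applied to $R^{T}x$ and $q_2-p_2$; (ii) $x^{T}Rp_2\le p_1^{T}Rp_2$, because $p$ is an exact Nash equilibrium so no deviation against $p_2$ helps; and (iii) expanding $q_1^{T}Rq_2-p_1^{T}Rp_2=(q_1-p_1)^{T}Rq_2+p_1^{T}R(q_2-p_2)$ and applying the estimate to each of the two terms gives $q_1^{T}Rq_2\ge p_1^{T}Rp_2-\epsilon$. Chaining (i)–(iii), $x^{T}Rq_2-q_1^{T}Rq_2\le \tfrac32\epsilon\le 2\epsilon$, and the same bound holds for the column player by symmetry, so $q$ is a $2\epsilon$-approximate Nash equilibrium.

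There is no genuine obstacle here — the statement is just a Lipschitz/continuity property of the Nash payoff functional — but the point requiring care is exactly the perturbation estimate: the naive bound $|w^{T}v|\le\|w\|_{\infty}\|v\|_1\le\epsilon$ would only deliver a $3\epsilon$-approximate equilibrium, so one must exploit that $v$ has zero coordinate-sum to halve each error term. I would also remark that the operative meaning of "$2\epsilon$-approximate Nash equilibrium" for the reduction in Lemma~\ref{sample-PPAD} is that no player can raise their expected payoff by more than $2\epsilon$ via a unilateral deviation, which is precisely what the argument above establishes.
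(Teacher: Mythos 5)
Your argument is internally sound: the zero-mean perturbation estimate is correct, and the chain (i)--(iii) does give a $\tfrac{3}{2}\epsilon$ bound on the gain of any mixed deviation against $q$. The problem is that this proves a different statement from the one the paper asserts. The paper's definition of an $\epsilon$-approximate Nash equilibrium is of the well-supported kind: for every player $i$, every pure strategy $s_i$ that carries positive probability under the candidate distribution, and every alternative $s_i'$, the expected payoff of $s_i$ against the opponent's marginal must be within $\epsilon$ of that of $s_i'$. Your closing remark explicitly substitutes the aggregate-regret reading (``no player can raise their expected payoff by more than $2\epsilon$ by a unilateral deviation''), and your bound on $x^{T}Rq_2-q_1^{T}Rq_2$ does not imply the per-pure-strategy condition: a strategy carrying only $\epsilon/2$ of $q_1$-mass can have regret close to $1$ against $q_2$ while the aggregate regret stays $O(\epsilon)$ (take $p_1$ pure on a dominant strategy and let $q_1$ leak mass $\epsilon/2$ onto a strictly dominated strategy outside the support of $p_1$). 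So, measured against the paper's own definition, the decomposition in (iii) and the chaining are beside the point, and there is a genuine gap.

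The paper's proof works directly with the defined condition: it fixes a pure $s_i$ in the support and a pure deviation $s_i'$, inserts $p_{-i}$, uses exact-equilibrium optimality of $s_i$ against $p_{-i}$, and bounds each of the two substitution errors crudely by $\|q_{-i}-p_{-i}\|_1\max_s u(s)\le\epsilon$, giving $2\epsilon$. Your sharper estimate is easily redirected to that goal: instantiate it with the pure deviation $x=e_{s_i'}$ and compare $e_{s_i}^{T}Rq_2$ with $e_{s_i}^{T}Rp_2$, i.e. run your two perturbation bounds per pure strategy rather than on the whole mixed profile; this yields the well-supported bound (in fact $\epsilon$ rather than $2\epsilon$). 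One caveat, which your proof shares with the paper's: exact-equilibrium optimality of $s_i$ is only available for $s_i$ in the support of $p_i$, so the per-strategy claim really requires $\mathrm{supp}(q_i)\subseteq\mathrm{supp}(p_i)$ (the counterexample above shows the lemma is false without it); this is harmless where the lemma is used, since the empirical distributions produced in Lemma \ref{Lemma 1} are supported inside the support of $p$.
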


By Lemma \ref{Lemma 1}, we can run algorithm A for $O(m^2\epsilon^{-2})$ times to construct a desired probability distribution $q$ with high probability. By Lemma \ref{Lemma 2}, $q$ is an $2\epsilon$-approximate Nash equilibrium. To find an $1/m^c$-approximate Nash equilibrium, we need to use $A$ for $O(m^{2c+2})$ times, which is polynomial in input size $2m^2$.

\subsubsection{Proof of Lemma \ref{Lemma 1}}

We assume that player $1$'s $m$ strategies are $s_1, s_2,\ldots, s_m$. Define $k=\lceil 4000m^2/\epsilon^2 \rceil = O(m^2\epsilon^{-2})$. For each $i\in [k]$, $j\in [m]$, define random variable $X_{ij}$ taking values in $\{0,1\}$, where $X_{ij}=1$ with probability $p_1(s_j)$.

Suppose that $\epsilon_j=\frac{\epsilon}{2m}$ for each $j\in [m]$. Define random variables $X_j$ to be $X_j=\frac{\sum_{i\in [k]} X_{ij}}{k}$ for each $j\in [m]$. By Chernoff bound,
\begin{equation}
Pr( X_j \ge p_1(s_j)+\epsilon_j) \le e^{-2\epsilon_j^2 k}
\end{equation}
and
\begin{equation}
Pr( X_j \le p_1(s_j)-\epsilon_j) \le e^{-2\epsilon_j^2 k}.
\end{equation}
Define a probability vector $q_1$ to be $(X_1, \ldots, X_m)$, which is a distribution over strategies $(s_1, s_2, \ldots, s_m)$. It is easily checkable that $\sum_j X_j = 1$ and that
\begin{eqnarray}
Pr(||q_1-p_1||_1 \le \sum_{j\in [m]}|X_j-p_1(s_j)| \le \sum_{j\in [m]}\epsilon_j \le \epsilon) &\ge& 1- \sum_{j\in [m]}2e^{-2\epsilon_j^2 k}\\
&\ge& 0.995.
\end{eqnarray}

Similarly, we can get $q_2$ satisfying $||q_2-p_2||_1\le \epsilon$ with probability at least $0.995$. By union bound, we can get the desired $q$ with probability at least $0.99$.

\subsubsection{Proof of Lemma \ref{Lemma 2}}

For all $i$ in $\{1,2\}$, for all $s_i'$ in the set of strategies of player $i$, and for all $s_i$ in the support of the set of strategies of player $i$, we have the following:

\begin{align*}
& \sum_{s_{-i}}q_{-i}(s_{-i})u(s_i's_{-i})-\sum_{s_{-i}}q_{-i}(s_{-i})u(s_is_{-i})\\
= &  \sum_{s_{-i}}p_{-i}(s_{-i})u(s_i's_{-i})+\sum_{s_{-i}}(q_{-i}(s_{-i})-p_{-i}(s_{-i}))u(s_i's_{-i})-\sum_{s_{-i}}q_{-i}(s_{-i})u(s_is_{-i}) \\
\le &  \sum_{s_{-i}}p_{-i}(s_{-i})u(s_is_{-i})+ ||q_{-i}-p_{-i}||_1\max_su(s)-\sum_{s_{-i}}q_{-i}(s_{-i})u(s_is_{-i}) \\
\le &  \sum_{s_{-i}}p_{-i}(s_{-i})u(s_is_{-i})+ \epsilon \times 1-\sum_{s_{-i}}q_{-i}(s_{-i})u(s_is_{-i}) \\
\le &  \sum_{s_{-i}}(p_{-i}(s_{-i})-q_{-i}(s_{-i}))u(s_is_{-i})+ \epsilon \\
\le &  ||q_{-i}-p_{-i}||_1 \max_su(s)+ \epsilon \\
\le &  \epsilon \times 1 + \epsilon \\
\le &  2\epsilon
\end{align*}

Following the definition of approximate Nash equilibrium, $q$ is a $2\epsilon$-approximate Nash equilibrium of $G$.

\section{A Lower Bound under the Oracle Model}
\subsection{The Oracle Model}
The oracle model is also called black-box model, or relativized model, and is one of simplest models in computer science. Suppose that there is a boolean function $f: [N] \rightarrow \{0, 1\}$, and that $f$ can be computed in polynomial time. We want to find an $x \in [N]$, such that $f(x) = 1$. In the context of {\bf PPAD}-complete problems, $N$, which could be exponential in the size of the input, is the number of points in the search space, and $f(x) = 1$ for $x \in [N]$ means that $x$ is the answer we desire. For a {\bf PPAD}-complete problem, there always exists an $x \in [N]$, such that $f(x) = 1$, and the question is that we do not know where it is. Such an $f$ is called an oracle, and we want to compute an $x \in [N]$, such that $f(x) = 1$.

In an oracle model, algorithms are allowed to make queries to the oracle but are prohibited to take advantage of what underlies the oracle. Since we use quantum algorithms here, we can also make use of quantum superposition. For instance, for a quantum state $\sum_x \alpha_x \ket{x}$, we first add some ancilla qubits, obtaining $\sum_x \alpha_x \ket{x}\ket{0}$, and then make a single query to the oracle, getting $\sum_x \alpha_x \ket{x} \ket{f(x)}$.

In summary, in the oracle model, the function $f$ can be seen as the input, and we need to design a quantum algorithm to find a solution $x \in [N]$ with $f(x) = 1$, which is guaranteed to exist. The time complexity is what we care and is defined to the number of queries made to the oracle.

\subsection{The Lower Bound}
We use a hybrid argument of \cite{BBBV1997, Vaz2004} to show a result when there is only one $x \in [N]$ such that $f(x) = 1$, namely for the problems with a single solution. Hybrid argument is from a classic paper by Yao \cite{Yao1982}, and later has numerous applications in cryptography and complexity theory \cite{BM1984, GL1989, HILL1999, INW1994, Nis1991, Nis1992, NW1994}. Thus, our proof is not new, and existing techniques are enough to prove the result. This is partly due to the fact that the oracle model is very well-studied.

\begin{Thm}
Under the oracle model, to solve a {\bf PPAD}-complete problem with a single solution, any quantum algorithm has to make at least $\Omega(\sqrt{N})$ queries to the oracle.
\end{Thm}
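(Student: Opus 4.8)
The plan is to observe that, under the oracle model set up in this section, a \textbf{PPAD}-complete problem with a single solution is exactly the quantum unstructured search problem with a unique marked item, and then to run the Bennett--Bernstein--Brassard--Vazirani hybrid argument. Model the oracle as the unitary $O_f:\ket{x}\ket{b}\mapsto\ket{x}\ket{b\oplus f(x)}$, so that a $T$-query algorithm is a fixed sequence $U_T O_f U_{T-1}\cdots O_f U_0$ of input-independent unitaries interleaved with queries. For each $y\in[N]$ let $O_y$ be the oracle whose unique solution is $y$, and let $O_0$ be the (hypothetical, solution-free) all-zero oracle; we may still feed $O_0$ to the algorithm and track its behaviour. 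Write $\ket{\psi_t}$ for the state after $t$ queries when the oracle is $O_0$, let $E_{t,y}$ be the projection onto basis states whose query register equals $y$, and set the query weight $q_t(y)=\|E_{t,y}\ket{\psi_t}\|^2$. Since each $\ket{\psi_t}$ is a unit vector, $\sum_{y\in[N]}q_t(y)\le 1$ for every $t$, hence $\sum_{t=0}^{T-1}\sum_{y}q_t(y)\le T$.

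The core step is the standard hybrid perturbation bound: writing $\ket{\psi_t^{(y)}}$ for the state after $t$ queries when the oracle is $O_y$,
\[
\big\| \ket{\psi_T^{(y)}}-\ket{\psi_T} \big\| \;\le\; 2\sum_{t=0}^{T-1}\big\|E_{t,y}\ket{\psi_t}\big\| \;\le\; 2\sqrt{\,T\sum_{t=0}^{T-1}q_t(y)\,}.
\]
I would prove the first inequality by introducing the hybrid states that apply $O_y$ for the first $k$ queries and $O_0$ for the remaining ones: consecutive hybrids differ only in whether $O_y$ or $O_0$ acts on $\ket{\psi_k}$, the two oracles agree except on the range of $E_{k,y}$, so the two resulting states differ by at most $\|(O_y-O_0)E_{k,y}\ket{\psi_k}\|\le 2\|E_{k,y}\ket{\psi_k}\|$; telescoping over $k$ with the triangle inequality and unitary invariance of the norm gives the claim, and the second inequality is Cauchy--Schwarz.

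Finally I would use averaging to choose a good target. At most $N/3$ values of $y$ satisfy $\sum_{t}q_t(y)>3T/N$ (else the total weight would exceed $T$); and if $p_0(y)$ denotes the probability that the algorithm run on $O_0$ outputs $y$, then $\sum_y p_0(y)\le 1$, so at most $N/3$ values of $y$ satisfy $p_0(y)>3/N$. Hence for $N$ large there is a $y^\star$ with $\sum_t q_t(y^\star)\le 3T/N$ and $p_0(y^\star)\le 3/N$. For a correct algorithm the probability of outputting $y^\star$ when the oracle is $O_{y^\star}$ is at least $2/3$, and since the probability of any fixed measurement outcome changes by at most $2\|\ket{\psi_T^{(y^\star)}}-\ket{\psi_T}\|$ between two states, we get $2/3\le p_0(y^\star)+2\cdot 2\sqrt{\,T\cdot 3T/N\,}\le 3/N+4\sqrt{3}\,T/\sqrt{N}$, forcing $T=\Omega(\sqrt{N})$. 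The main obstacle is the careful bookkeeping in the hybrid argument — in particular the factor-$2$ per-query perturbation bound and the clean passage from output probabilities to Euclidean state distance; by contrast the averaging step and the identification of ``\textbf{PPAD}-complete with a single solution'' with unique-element search are routine, the latter being essentially the oracle model already defined above.
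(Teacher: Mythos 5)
Your proof is correct and runs on the same engine as the paper's: the BBBV hybrid argument against the all-zero oracle, with total query magnitude at most the number of queries and a Cauchy--Schwarz step turning low magnitude into a small perturbation of the final state. Where you genuinely diverge is the endgame. The paper extracts \emph{two} distinct low-magnitude points $z_1,z_2$ (its Claim 1), builds two legal single-solution oracles $g_1,g_2$, bounds both $\|\phi_{h,k}-\phi_{g_1,k}\|$ and $\|\phi_{h,k}-\phi_{g_2,k}\|$, and concludes via a triangle inequality that the algorithm cannot distinguish $g_1$ from $g_2$, even though correctness forces it to output $z_1$ in one case and $z_2\ne z_1$ in the other. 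You instead compare a single legal instance $O_{y^\star}$ directly to the null run, choosing $y^\star$ by a joint averaging so that both the query weight $\sum_t q_t(y^\star)\le 3T/N$ and the null-run output probability $p_0(y^\star)\le 3/N$ are small, and derive the contradiction from correctness on $O_{y^\star}$ alone. Both closings are standard and rigorous. The paper's two-instance version only ever argues about distinguishability, so it never needs to inspect what the algorithm outputs on the illegal all-zero oracle, at the price of the extra triangle inequality and the separate claim producing two low-magnitude points; your one-instance version needs the run on $O_0$ to have a well-defined output distribution (which it does, as you note) and in exchange is slightly tighter and more self-contained, since you also prove the per-query perturbation bound that the paper imports from Vazirani's lecture notes rather than citing it.
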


\proof

Suppose $A$ is an (arbitrary) algorithm under the oracle model, and it makes $k$ queries to the input oracle. If $k = \Omega(N)$, then everything is done and we need to do nothing. So a reasonable assumption is that $k = o(N)$.

We define an auxiliary oracle function $h: [N] \rightarrow \{0, 1\}$ with $h(y) = 0$ for all $y \in [N]$. Such an oracle cannot characterize any {\bf PPAD}-complete problem, as there are always solutions for {\bf PPAD}-complete problems while here $h$ means no solution at all. So we use $h$ just purely for analysis.

Run $A$ on $h$ and we call such a run $A_h$. Let $\sum_{y: y \in [N]} \alpha_{y, t}\ket{y}$ be the query at time $t \in [k]$, and let the query magnitude of $y$ to be $\sum_{t \in [k]} |\alpha_{y, t}|^2$. It is not hard to see that the expected query magnitude over all possible $y$ is $E_{y} (\sum_t |\alpha_{y, t}|^2) = k / N$. We have the following claim.
\begin{Claim}\label{claim}
There exist $z_1, z_2 \in [N]$ with $z_1 \ne z_2$, such that $\sum_t |\alpha_{z_1, t}|^2 \le (k + 1)/ N$ and $\sum_t |\alpha_{z_2, t}|^2 \le (k + 1)/ N$.
\end{Claim}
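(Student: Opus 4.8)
The plan is to use a simple averaging (pigeonhole / first-moment) argument. We know that $\sum_{y \in [N]} \left( \sum_{t \in [k]} |\alpha_{y,t}|^2 \right) = \sum_{t \in [k]} \sum_{y \in [N]} |\alpha_{y,t}|^2 = \sum_{t \in [k]} 1 = k$, since each query state $\sum_y \alpha_{y,t}\ket{y}$ is a unit vector (or at worst has norm at most $1$). Hence the average query magnitude is exactly $k/N$ as already noted. First I would count how many indices $y$ can have query magnitude exceeding $(k+1)/N$: if there were $r$ such indices, the total magnitude would be strictly greater than $r \cdot (k+1)/N$, which must be at most $k$, so $r < kN/(k+1) < N$. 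In fact $r \le N - \lceil N/(k+1)\rceil$, so the number of "good" indices with query magnitude at most $(k+1)/N$ is at least $\lceil N/(k+1)\rceil$.

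Next I would observe that since $k = o(N)$ (the reasonable assumption already made in the proof), we have $N/(k+1) \to \infty$, and in particular $N/(k+1) \ge 2$ for $N$ large enough; thus there are at least two distinct indices $z_1 \ne z_2$ in $[N]$ each with $\sum_t |\alpha_{z_i,t}|^2 \le (k+1)/N$. This is exactly the statement of the claim. (If one wants to avoid the asymptotic assumption entirely, note that the bound $r < N$ already forces at least one good index, and a slight strengthening — using that two bad indices would need total magnitude $> 2(k+1)/N$, giving $r \le k N/(k+1)$ more carefully — yields at least two good indices whenever $N > k+1$, which holds since $k = o(N)$.)

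The argument is entirely routine; there is no real obstacle. The only point requiring a little care is making sure the inequality is strict in the right place so that "more than $k$" is genuinely contradicted, and confirming that the regime $k = o(N)$ (equivalently $N \ge k+2$) guarantees \emph{two} good indices rather than just one — but this follows immediately from $\lceil N/(k+1) \rceil \ge 2$. I would state the averaging bound, extract the count of good indices, and invoke $k = o(N)$ to conclude existence of the pair $z_1, z_2$.
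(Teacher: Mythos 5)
Your proposal is correct and is essentially the paper's argument: both rest on the fact that the total query magnitude $\sum_{y\in[N]}\sum_{t}|\alpha_{y,t}|^2$ equals $k$, so too many indices with magnitude exceeding $(k+1)/N$ would force the total above $k$, with $k=o(N)$ (i.e.\ $N>k+1$) guaranteeing room for two good indices. The paper merely phrases this as a contradiction (assuming at most one index has magnitude $\le (k+1)/N$, the remaining $N-1$ indices push the total above $k$), whereas you count the bad indices directly; the content is the same.
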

By Cauchy-Schwartz inequality, we know that $\sum_t|\alpha_{z_1, t}| \le (k + 1)/\sqrt{N}$ and that $\sum_t|\alpha_{z_2, t}| \le (k + 1)/\sqrt{N}$.

Let $\phi_{h, t}$, $t\in [k]$ be the states of $A_h$ after the $t$-th step. We define two oracles $g_1: [N] \rightarrow \{0, 1\}$ and $g_2: [N] \rightarrow \{0, 1\}$:
\begin{itemize}
\item $g_1(z_1) = 1$ and for all $y \ne z_1$, $g_1(y) = 0$;
\item $g_2(z_2) = 1$ and for all $y \ne z_2$, $g_2(y) = 0$.
\end{itemize}
$g_1$ and $g_2$ are the legal inputs of $A$ and correspond to {\bf PPAD}-complete problems. Now run the algorithm $A$ on $g_1$ (the run is denoted as $A_{g_1}$) and suppose the final state of $A_{g_1}$ is $\phi_{g_1, k}$. By hybrid argument, we have the following claim.
\begin{Claim}\cite{Vaz2004}\\
$\phi_{h, k} - \phi_{g_1, k} = \sum_{t = 1}^{k}E_t$, where $||E_t|| \le \sqrt{2}|\alpha_{z_1, t}|$.
\end{Claim}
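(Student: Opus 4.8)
The plan is to run the standard hybrid argument of \cite{BBBV1997, Vaz2004}. Write the action of $A$ as an alternating product of query-independent unitaries $U_0, U_1, \ldots, U_k$ and oracle calls, so that on an oracle $f$ the final state is $\phi_{f, k} = U_k O_f U_{k-1} O_f \cdots U_1 O_f U_0 \ket{0 \cdots 0}$, where $O_f : \ket{y}\ket{b} \mapsto \ket{y}\ket{b \oplus f(y)}$ acts on the query register together with a fresh answer register initialized to $\ket{0}$ (as in the oracle convention of Section 5.1). Since $h \equiv 0$ we have $b \oplus h(y) = b$, so $O_h$ is the identity, and $O_{g_1}$ differs from $O_h$ only on basis states whose query register equals $z_1$.

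For $j = 0, 1, \ldots, k$ I would define the hybrid state $\phi^{(j)}$ obtained by answering the first $j$ queries with $O_h$ and all remaining queries with $O_{g_1}$. Then $\phi^{(k)} = \phi_{h, k}$ and $\phi^{(0)} = \phi_{g_1, k}$, so setting $E_t = \phi^{(t)} - \phi^{(t-1)}$ gives the telescoping identity $\phi_{h, k} - \phi_{g_1, k} = \sum_{t=1}^{k} E_t$ immediately.

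It then remains to bound $||E_t||$. The runs producing $\phi^{(t)}$ and $\phi^{(t-1)}$ use exactly the same operations except at the $t$-th query, where one applies $O_h$ and the other $O_{g_1}$; and since the first $t-1$ queries use $O_h$ in both, the state entering the $t$-th query is the same in both, and it is precisely the pre-$t$-th-query state of the all-zero run $A_h$, whose query register is $\sum_y \alpha_{y, t}\ket{y}$. All operations after the $t$-th query are unitary, hence norm preserving, so $||E_t|| = ||(O_h - O_{g_1})\,\xi_t||$ for that common pre-query state $\xi_t$. Because $O_h - O_{g_1}$ annihilates every component whose query register is $\neq z_1$, and on the component with query register $z_1$ and answer register $\ket{0}$ it sends $\ket{z_1}\ket{0}\ket{w} \mapsto \ket{z_1}(\ket{0} - \ket{1})\ket{w}$, its operator norm on that component is $\sqrt{2}$; and that component has norm $|\alpha_{z_1, t}|$ by the definition of query magnitude. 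Hence $||E_t|| \le \sqrt{2}\,|\alpha_{z_1, t}|$.

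The one delicate point is the bookkeeping in the last step: one must check that the pre-query state in the hybrid genuinely coincides with the pre-query state of $A_h$, so that the amplitudes $\alpha_{z_1, t}$ computed from $A_h$ (and used in Claim~\ref{claim}) are exactly the relevant ones, and one must use that the answer register is a fresh $\ket{0}$ at query time to get the constant $\sqrt{2}$; allowing an arbitrary answer register would only degrade the bound to $2|\alpha_{z_1, t}|$, which still suffices for the eventual $\Omega(\sqrt{N})$ lower bound. I would therefore state the oracle convention explicitly and note that everything else is a routine consequence of unitarity of the $U_t$'s.
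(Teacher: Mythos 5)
Your proof is correct and is precisely the standard hybrid argument of \cite{BBBV1997,Vaz2004} on which the paper relies: the paper offers no proof of this claim, citing \cite{Vaz2004} instead, and your reconstruction (telescoping hybrids chosen so that the common prefix is the all-zero run, making the pre-query state exactly that of $A_h$ with query-register amplitudes $\alpha_{y,t}$, followed by unitarity of the common suffix) matches that source. Your remark that the fresh $\ket{0}$ answer register is what yields the constant $\sqrt{2}$ (an arbitrary answer register would only give $2|\alpha_{z_1,t}|$) is accurate and consistent with the oracle convention of Section 5.1.
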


Along with the triangle inequality, we have
\begin{eqnarray}
||\phi_{h, k} - \phi_{g_1, k}|| &\le& \sum_t||E_t|| \nonumber \\
&\le& \sqrt{2} \sum_t|\alpha_{z_1,t}| \nonumber \\
&\le& (k+1)\sqrt{2/N}.
\end{eqnarray}
Similarly, if we run the algorithm $A$ on $g_2$ (the run is denoted as $A_{g_2}$) and the final state of $A_{g_2}$ is $\phi_{g_2, k}$, then a hybrid argument and the triangle inequality could show that
\begin{equation}
||\phi_{h, k} - \phi_{g_2, k}|| \le (k + 1)\sqrt{2/N}.
\end{equation}
If we apply the triangle inequality for another time, we get
\begin{equation}
||\phi_{g_1, k} - \phi_{g_2, k}|| \le 2(k + 1)\sqrt{2/N},
\end{equation}
implying that $\phi_{g_1, k}$ and $\phi_{g_2, k}$ can be distinguished with probability at most $O(k/\sqrt{N})$. Since $z_1 \ne z_2$, if $A$ can solve problems corresponding to $g_1$ and $g_2$, namely if $A$ can find $z_1$ and $z_2$, it should at least distinguish $g_1$ and $g_2$, and also $\phi_{g_1, k}$ and $\phi_{g_2, k}$ with some constant probability. As a result, $A$ should at least make $\Omega(\sqrt{N})$ queries.

\endproof

When there are multiple solutions, say $p$ solutions, then $k = \Omega(\sqrt{N/p})$, which is a straightforward generalization from the theorem above. More formally,
\begin{Cor}
Under the oracle model, to solve a {\bf PPAD}-complete problem with $p$ solutions, any quantum algorithm has to make at least $\Omega(\sqrt{N/p})$ queries to the oracle.
\end{Cor}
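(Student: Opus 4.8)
The plan is to mimic the hybrid argument of the preceding theorem, replacing the two single marked points $z_1,z_2$ by two disjoint sets $T_1,T_2\subseteq[N]$ of size $p$ each. As before, run an arbitrary algorithm $A$ making $k$ queries (we may assume $k=o(N)$, and also $2p\le N$, since otherwise $\sqrt{N/p}=O(1)$ and there is nothing to prove) on the all-zero oracle $h$, and for $y\in[N]$ let $m_y=\sum_{t\in[k]}|\alpha_{y,t}|^2$ be its query magnitude, so that $\sum_{y}m_y=k$. First I would show, by a simple averaging (pigeonhole) argument, that the $2p$ indices with smallest query magnitude have total query magnitude at most $\frac{2p}{N}\sum_y m_y\le 2pk/N$; splitting them into two halves $T_1,T_2$ of size $p$, every subset of them (in particular each $T_i$) has $\sum_{y\in T_i}m_y\le 2pk/N$. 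Define the legal oracles $g_1,g_2$ with $g_i(y)=1$ iff $y\in T_i$; these correspond to {\bf PPAD}-complete instances each having exactly $p$ solutions.

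Next I would invoke the set-valued version of the hybrid argument (the analogue of the claim of \cite{Vaz2004} used above): comparing the runs $A_h$ and $A_{g_i}$, the perturbation $E_t$ introduced at step $t$ by flipping the oracle on all of $T_i$ satisfies $||E_t||\le \sqrt{2}\,\bigl(\sum_{y\in T_i}|\alpha_{y,t}|^2\bigr)^{1/2}$, since only the part of the query state supported on $T_i$ is disturbed and a single answer-bit flip moves a basis state by $\sqrt{2}$ in norm. Summing over $t$ and applying Cauchy--Schwarz in the time index, $\sum_t||E_t||\le \sqrt{2}\,\sqrt{k}\,\bigl(\sum_{y\in T_i}m_y\bigr)^{1/2}\le 2k\sqrt{p/N}$, hence $||\phi_{h,k}-\phi_{g_i,k}||\le 2k\sqrt{p/N}$, and by the triangle inequality $||\phi_{g_1,k}-\phi_{g_2,k}||\le 4k\sqrt{p/N}$. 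Because $T_1$ and $T_2$ are disjoint, any $A$ that solves both instances must, with constant probability, output an element of $T_1$ on input $g_1$ and an element of $T_2$ on input $g_2$, so $\phi_{g_1,k}$ and $\phi_{g_2,k}$ must be separated by a constant in $\ell_2$ distance; therefore $4k\sqrt{p/N}=\Omega(1)$, i.e. $k=\Omega(\sqrt{N/p})$.

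The main obstacle is getting the hybrid bound to degrade in exactly the right way when $p$ points are flipped at once: one must keep the per-step estimate as the $\ell_2$ norm $\bigl(\sum_{y\in T_i}|\alpha_{y,t}|^2\bigr)^{1/2}$ over the flipped set rather than the cruder $\sum_{y\in T_i}|\alpha_{y,t}|$, because only the former, combined with $\sum_{y\in T_i}m_y\le 2pk/N$ and one application of Cauchy--Schwarz over $t$, produces the tight $\sqrt{N/p}$ lower bound instead of a weaker $\sqrt{N}/p$. Everything else — the averaging choice of $T_1,T_2$, and the reduction of "solving" to "distinguishing" via disjointness of the solution sets — is routine, and the constants (and the harmless $k$ versus $k+1$ discrepancies) can be absorbed exactly as in the single-solution proof.
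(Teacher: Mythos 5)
Your proposal is correct, and it in fact supplies an argument where the paper gives none: the paper states the corollary as a ``straightforward generalization'' of the single-solution theorem and offers no proof. Your route follows the same hybrid-argument framework as the paper's theorem, but with the refinements that a genuine proof of the $p$-solution case requires: choosing two \emph{disjoint} sets $T_1,T_2$ of $p$ low-magnitude indices by averaging (sum of the $2p$ smallest magnitudes is at most $2pk/N$), bounding the per-step perturbation by the $\ell_2$ norm $\sqrt{2}\bigl(\sum_{y\in T_i}|\alpha_{y,t}|^2\bigr)^{1/2}$ over the flipped set rather than by $\sqrt{2}\sum_{y\in T_i}|\alpha_{y,t}|$, and then applying Cauchy--Schwarz over the time index. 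Your observation about why this matters is exactly right: the cruder per-element triangle inequality, which is what a naive extension of the paper's single-solution calculation would give, only yields $k=\Omega(\sqrt{N}/p)$, whereas the $\ell_2$ treatment gives the stated $\Omega(\sqrt{N/p})$. The distinguishing step (disjointness of $T_1,T_2$ forces the final states of $A_{g_1}$ and $A_{g_2}$ to be $\Omega(1)$ apart if $A$ succeeds with constant probability) is sound. The only caveat, which you inherit directly from the paper rather than introduce yourself, is the modeling assumption that an arbitrary oracle with exactly $p$ marked points ``corresponds to'' a {\bf PPAD}-complete instance; the paper makes the same identification for its single-solution oracles $g_1,g_2$, so your proof is at the same level of rigor as the theorem it generalizes.
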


\subsubsection{Proof of Claim \ref{claim}}
Let us suppose that there does not exist $z_1, z_2 \in [N]$ with $z_1 \ne z_2$, such that $\sum_t |\alpha_{z_1, t}|^2 \le (k + 1)/ N$ and $\sum_t |\alpha_{z_2, t}|^2 \le (k + 1)/ N$. This means there is at most one $z \in [N]$ such that $\sum_t |\alpha_{z, t}|^2 \le (k + 1)/ N$, and for all $y \ne z$, $y \in [N]$, $\sum_t |\alpha_{y, t}|^2 > (k + 1)/ N$. Thus,
\begin{eqnarray}
\sum_{y: y \in [N]} \sum_t |\alpha_{y, t}|^2 &=&  \sum_{y: y \ne z} \sum_t |\alpha_{y, t}|^2 + \sum_t |\alpha_{z, t}|^2 \nonumber \\
&\ge& \sum_{y: y \ne z} \sum_t |\alpha_{y, t}|^2 \nonumber \\
&>& (N - 1) \times (k + 1)/ N \nonumber \\
&=& k + 1 - (k + 1)/N \nonumber \\
&>& k. \label{contra1}
\end{eqnarray}

But we have already known that
\begin{equation}
E_{y: y \in [N]} (\sum_t |\alpha_{y, t}|^2) = k / N,
\end{equation}
and that
\begin{equation}
\sum_{y: y \in [N]} (\sum_t |\alpha_{y, t}|^2) = k. \label{contra2}
\end{equation}
The inequality (\ref{contra1}) and the equation (\ref{contra2}) exhibit clear contradiction. Consequently, our assumption that there does not exist $z_1, z_2 \in [N]$ with $z_1 \ne z_2$, such that $\sum_t |\alpha_{z_1, t}|^2 \le (k + 1)/ N$ and $\sum_t |\alpha_{z_2, t}|^2 \le (k + 1)/ N$ is incorrect. This completes the proof of Claim \ref{claim}.

\section{Concluding Remarks}
On the one hand, it seems that the well-studied oracle model presents us an insurmountable obstacle towards an exponentially speed-up using quantum computers for computing {\bf PPAD}-complete problems. If we want to make a step closer to prove our conjecture that {\bf PPAD} is contained in {\bf BQP}, we have to get rid of oracles and design new structures that can provide more information. We believe that this may need fundamental revolution in the field of quantum computing. The theory community has spent lots of effort in designing quantum algorithms for factoring as well as graph isomorphism, two special problems between {\bf P} and {\bf NP}. And now it is the time that we turn our attention to the third special problem, \emph{NASH}, or more generally {\bf PPAD}-complete problems.

On the other hand, it seems that purely exploiting the potential of quantum superposition is not enough, and quantum entanglement may play a more important role as a resource for quantum computation. It is well-known that quantum information theory relies on entanglement in two quite different contexts: as a resource for quantum computation and as a source for nonlocal correlations among different parties. It is strange and not understood that entanglement is crucially linked with nonlocality but not with computation. Quantum computation and nonlocality are two faces of entanglement, and more connections should be established in the future.

\section{Acknowledgments}
Thanks to Shengyu Zhang for discussions at the early stage of this work.

\newpage
\thispagestyle{empty}

\bibliography{BQP_PPAD}
\bibliographystyle{plain}

\newpage
\thispagestyle{empty}

\appendix

\section{Proof of Theorem \ref{thm: QE to CE}}
Recall that we are given that $\mu _i(\rho) \ge \mu _i(\Phi_i(\rho) )$ for all players $i$ and all admissible super-operators $\Phi_i$ on $H_i$, and we want to prove that for all players $i$ and all strategies $s_i, s'_i\in S_i$,
\begin{equation}\label{eq: CNE}
	\sum_{s_{-i}} p(s_i,s_{-i}) u_i(s_i,s_{-i}) \geq  \sum_{s_{-i}} p(s_i,s_{-i}) u_i(s_i',s_{-i})
\end{equation}
for $p(s) = \rho_{ss}$.

Fix $i$ and $s_i, s'_i$. Consider the admissible super-operator $\Phi_i$ defined by
\begin{equation}
	\Phi_i = \sum_{t_i \neq s_i} P_{t_i} \rho P_{t_i} + (s_i \leftrightarrow s_i') P_{s_i} \rho P_{s_i} (s_i \leftrightarrow s_i')
\end{equation}
where $P_{t_i}$ is the projection onto the subspace $span(t_i)\otimes H_{-i}$, and $(s_i \leftrightarrow s_i')$ is the operator swapping $s_i$ and $s_i'$. It is not hard to verify that $\Phi_i$ is an admissible super-operator. Next we will show that the difference of $\mu_i(\rho)$ and $\mu_i(\Phi_i(\rho))$ is the same as that of the two sides of Eq. \eqref{eq: CNE}.

\begin{align}
\nonumber \mu_i(\rho) & = \av[u_i(s(\rho))]\\
\nonumber & = \sum_{\bar s\in S} \bra{\bar s}\rho \ket{\bar s} u_i(\bar s) = \sum_{\bar s\in S} p(\bar s) u_i(\bar s) \\
 & = \sum_{\bar s_i \neq s_i} \sum_{\bar s_{-i}} p(\bar s)u_i(\bar s) + \sum_{\bar s_{-i}} p(s_i\bar s_{-i})u_i(s_i\bar s_{-i})
\end{align}

\begin{align}
\nonumber\mu_i(\Phi_i(\rho)) & = \sum_{\bar s\in S} \bra{\bar s} \Phi_i(\rho) \ket{\bar s} u_i(\bar s) \\
\nonumber& = \sum_{\bar s\in S} \bra{\bar s} \sum_{t_i \neq s_i} P_{t_i} \rho P_{t_i} + (s_i \leftrightarrow s_i') P_{s_i} \rho P_{s_i} (s_i \leftrightarrow s_i') \ket{\bar s} u_i(\bar s) \\
\nonumber& = \sum_{\bar s\in S} \bra{\bar s} \sum_{t_i \neq s_i} P_{t_i} \rho P_{t_i} \ket{\bar s} u_i(\bar s) + \sum_{\bar s\in S} \bra{\bar s} (s_i \leftrightarrow s_i') P_{s_i} \rho P_{s_i} (s_i \leftrightarrow s_i') \ket{\bar s} u_i(\bar s) \\
& = \sum_{t_i \neq s_i}\sum_{\bar s_{-i}} p(t_i \bar s_{-i}) u_i(t_i \bar s_{-i}) + \sum_{\bar s_{-i}} p(s_i \bar s_{-i}) u_i(s'_i\bar s_{-i}) \label{eqn:1}
\end{align}
% \eqref{eqn:1}

Since $\rho$ is a quantum correlated equilibrium, we have $\mu_i(\rho) \geq \mu_i(\Phi_i(\rho))$. Comparing the above two expressions for $\mu_i(\rho)$ and $\mu_i(\Phi_i(\rho))$ gives Eq. \eqref{eq: CNE} as desired.
%%%%%%%%%%%%%%%%%%%%%%%%%   Theorem 2  %%%%%%%%%%%%%%%%%%%%%%

\section{Proof of Theorem \ref{thm: CE to QE}}
\thispagestyle{empty}

Let $ \alpha(s_i) = \sum_{s_{-i}} p(s_i,s_{-i}) u_i(s_i,s_{-i})$ and $\beta(s_i, s_i') =  \sum_{s_{-i}} p(s_i,s_{-i}) u_i(s_i',s_{-i})$. Now for any $i$, we have
\begin{align}\label{eq: original payoff}
\mu_i(\rho) =  \sum_s \bra{s}\rho \ket{s} u_i(s) =  \sum_s p(s) u_i(s) = \sum_{s_i} \sum_{s_{-i}} p(s_i s_{-i}) u_i(s_i  s_{-i}) =  \sum_{s_i} \alpha(s_i)
\end{align}
where the first two steps are by the definition of $\mu_i$ and $p$. Now for an arbitrary TPCP super-operator $\Phi_i$, we use its Kraus representation to obtain
\begin{align}
	\Phi_i(\rho) = \sum_{j=1}^k (A_{ij} \otimes I_{-i})\rho(A_{ij}^* \otimes I_{-i})
\end{align}
with constraint $\sum_{j=1}^k A_{ij}^* A_{ij} = I_i$, where $I_i$ is the identity super-operator from $L(H_i)$ to $L(H_i)$. Now we have
\begin{align*}
\mu_i(\Phi_i(\rho)) &= \sum_{s'} \bra{s'} \Phi _i(\rho) \ket{s'} u_i(s') \qquad // \text{ by the def of } \mu_i \\
&=  \sum_{s'} \bra{s'}\sum_{j=1}^k (A_{ij} \otimes I_{-i})\rho(A_{ij}^* \otimes I_{-i}) \ket{s'} u_i(s') \\
&=  \sum_{s'} \sum_{j=1}^k  \bra{s'}(A_{ij} \otimes I_{-i})(\sum_s p(s) \ket{s} \bra{s})(A_{ij}^* \otimes I_{-i}) \ket{s'} u_i(s') \qquad // \text{ by the def of } \rho \\
&=  \sum_{s'} \sum_s \sum_{j=1}^k \bra{s'} A_{ij} \otimes I_{-i}\ket{s} \bra{s}A_{ij}^* \otimes I_{-i} \ket{s'} p(s) u_i(s') \\
&=  \sum_{s'_i} \sum_{s} \sum_{j=1}^k \bra{s'_i} A_{ij}\ket{s_i} \bra{s_i}A_{ij}^*\ket{s_i'}  p(s_is_{-i}) u_i(s'_is_{-i}) \\
&=  \sum_{s'_i} \sum_{s_i} \sum_{j=1}^k \bra{s'_i} A_{ij}\ket{s_i} \bra{s_i}A_{ij}^*\ket{s_i'}  \beta(s_i, s_i') \qquad // \text{ by the def of } \beta(s_i, s_i') \\
\end{align*}
Note that $\bra{s'_i} A_{ij}\ket{s_i} \bra{s_i}A_{ij}^*\ket{s_i'} = \|\bra{s'_i} A_{ij}\ket{s_i} \|^2 \geq 0$, thus by the assumption that $\beta(s_i, s_i') \leq \alpha(s_i')$ (i.e. $p$ is a classical correlated equilibrium), we have
\begin{align*}
\mu_i(\Phi_i(\rho)) &\leq \sum_{s_i} \sum_{j=1}^k  \sum_{s'_i} \bra{s'_i} A_{ij}\ket{s_i} \bra{s_i}A_{ij}^*\ket{s_i'}  \alpha(s_i) \\
&=  \sum_{s_i} \sum_{j=1}^k \bra{s_i} A_{ij}^*  (\sum_{s'_i}\ket{s'_i} \bra{s'_i}) A_{ij}\ket{s_i}  \alpha(s_i) \\
&=  \sum_{s_i} \bra{s_i} \sum_{j=1}^k A_{ij}^*  A_{ij} \ket{s_i}  \alpha(s_i') \\
&=  \sum_{s_i} \qip{s_i}{s_i}  \alpha(s_i) \\
&=  \mu_i(\rho)
\end{align*}
where the last equality is by Eq. \eqref{eq: original payoff}. This completes the proof of Theorem \ref{thm: CE to QE}.

\subsection{Proof of Corollary \ref{thm: QNE-exist}}
\thispagestyle{empty}

We will reduce the existence of a quantum Nash equilibrium to the existence of a Nash equilibrium.

For a given game $G$ with finite players and finite strategies, there always exists a Nash equilibrium, say $p$. We transform $p$ into a quantum state $\rho$ using the the mapping $\rho = \sum_s p(s)\ket{s}\bra{s}$. By Theorem \ref{thm: CE to QE}, $\rho$ is guaranteed to be quantum Nash equilibrium of $G$.

Thus, quantum Nash equilibria always exist.

%%%%%%%%%%%%%%%%%%%%%%%%%   Theorem 3  %%%%%%%%%%%%%%%%%%%%%%

\section{Proof of Theorem \ref{thm: purestate}}

\subsection{Examples of the First Statement}
Define utility functions of Player 1 and 2 to be:

\[
	A = \begin{bmatrix} 270 & 126 \\ 0 & 270 \end{bmatrix}.
\]

Suppose the initial state is

\begin{align*}
\ket{\psi} & = \sqrt{1/3}\ket{00}+\sqrt{1/6}\ket{01}+\sqrt{1/6}\ket{10}+\sqrt{1/3}\ket{11},
\end{align*}

whose corresponding density matrix is

\begin{align*}
	\rho & =
	\begin{bmatrix}
		1/3& \sqrt{1/18}& \sqrt{1/18}& 1/3 \\
        \sqrt{1/18}& 1/6&1/6 & \sqrt{1/18} \\
        \sqrt{1/18}& 1/6& 1/6& \sqrt{1/18} \\
        1/3& \sqrt{1/18}& \sqrt{1/18}&1/3
	\end{bmatrix},
\end{align*}

and whose corresponding classical correlated distribution is
\begin{align*}
 p = \begin{bmatrix} 1/3 & 1/6\\1/6  & 1/3	\end{bmatrix},
\end{align*} \\
which is easily verified to be a classical correlated equilibrium.

However, $\rho$ is not a quantum Nash equilibrium. Define a unitary matrix
\[
	G = \begin{bmatrix} \sqrt{2/3} & \sqrt{1/3} \\ \sqrt{1/3} & -\sqrt{2/3} \end{bmatrix}.
\]
Consider
\[
	\rho' = (G\otimes I)\rho (G\otimes I) =
	\begin{bmatrix}
	1/2 & \sqrt{2}/3 & 0 & -1/6\\
	\sqrt{2}/3 & 4/9 & 0 & -\sqrt{2}/9   \\
	0& 0& 0&0    \\
	-1/6& -\sqrt{2}/9   & 0& 1/18
	\end{bmatrix}.
\]
It is easily seen that $\rho'$ has higher expected utility value for player 1, actually
\[
	\mu_1(\rho') = 206, \qquad \mu_1(\rho) = 201.
\]
%%%%%%%%%%%%%%%%%%%%%%%%%   Theorem 3b  %%%%%%%%%%%%%%%%%%%%%%

\subsection{Proof of the Second Statement}
\thispagestyle{empty}

Let $\rho = \ket{\psi}\bra{\psi} = \sum_{a,b} \sqrt{p(a)p(b)}\ket{a}\bra{b}$. Then
\begin{align*}\label{eq: eg1-rho}
\mu_i(\rho)
& =  \sum_s \bra{s}\rho \ket{s} u_i(s) \\
& =  \sum_s \bra{s}\sum_{a,b} \sqrt{p(a)p(b)}\ket{a}\bra{b} \ket{s} u_i(s)\\
%& =  \sum_{s,a,b} \sqrt{p(a)p(b)} \qip{s}{a}\qip{b}{s} u_i(s) \\
& =  \sum_{s} p(s) u_i(s)\\
& =  \sum_{s_i} p_i(s_i)\sum_{s_{-i}}p_{-i}(s_{-i}) u_i(s_i, s_{-i})\\
& =  \sum_{s_i: p_i(s_i)>0} p_i(s_i)\sum_{s_{-i}}p_{-i}(s_{-i}) u_i(s_i, s_{-i})%\\
%& =  \sum_{s_i: p_i(s_i)>0} p_i(s_i)\sum_{s_{-i}}p_{-i}(s_{-i}) u_i(s_i', s_{-i}) \quad (\forall s_i' \in supp(p_i))\\
%& =  \sum_{a_i,b_i: p_i(a_i)>0,p_i(b_i)>0} \sqrt{p_i(a_i)p_i(b_i)}\qip{b_i}{a_i}\sum_{s_{-i}}p_{-i}(s_{-i})u_i(s_i',s_{-i})\\
%& =  \sum_{a_i,b_i,j:p_i(a_i)>0,p_i(b_i)>0} \sqrt{p_i(a_i)p_i(b_i)}\bra{b_i}A_{ij}^* A_{ij}\ket{a_i}\sum_{s_{-i}}p_{-i}(s_{-i})u_i(s_i',s_{-i})\\
%& =  \sum_{s_i,a_i,b_i,j:p_i(a_i)>0,p_i(b_i)>0} \sqrt{p_i(a_i)p_i(b_i)}\bra{b_i}A_{ij}^*\ket{s_i}\bra{s_i} A_{ij}\ket{a_i}\sum_{s_{-i}}p_{-i}(s_{-i})u_i(s_i',s_{-i})\\
%& =  \sum_{s_i,a_i,b_i,j:p_i(a_i)>0,p_i(b_i)>0} \sqrt{p_i(a_i)p_i(b_i)}\bra{s_i} A_{ij}\ket{a_i}\bra{b_i}A_{ij}^*\ket{s_i}\sum_{s_{-i}}p_{-i}(s_{-i})u_i(s_i',s_{-i})\\
%& \ge  \sum_{s_i,a_i,b_i,j:p_i(a_i)>0,p_i(b_i)>0} \sqrt{p_i(a_i)p_i(b_i)}\bra{s_i} A_{ij}\ket{a_i}\bra{b_i}A_{ij}^*\ket{s_i}\sum_{s_{-i}}p_{-i}(s_{-i})u_i(s_i,s_{-i})\\
%& =  \sum_{s_i,s_{-i},a_i,b_i,j:p_i(a_i)>0,p_i(b_i)>0} \sqrt{p_i(a_i)p_i(b_i)}\bra{s_i} A_{ij}\ket{a_i}\bra{b_i}A_{ij}^*\ket{s_i}p_{-i}(s_{-i})u_i(s_i,s_{-i}).
\end{align*}
Now assume that Player $i$ applies an admissible super-operator $\Phi_i$ on $\rho$:
\begin{align*}
	\Phi_i(\rho) =  \sum_{j=1}^k (A_{ij} \otimes I_{-i})\rho(A_{ij}^* \otimes I_{-i})
\end{align*}
where $\sum_{j=1}^k A_{ij}^* A_{ij} = I_i$. \\
Let $\bar s_i$ be a strategy \st $p_i(\bar s_i) > 0$. Then by the definition of Nash equilibrium, we have \begin{equation}
	\sum_{s_i} p_{-i}(s_{-i})u_i(s_i s_{-i}) \leq \sum_{s_i} p_{-i}(s_{-i})u_i(\bar s_i s_{-i}),
\end{equation}
for any $s_i$.
\begin{align*}
\mu_i(\Phi_i(\rho))
& = \sum_{s} \bra{s} \Phi _i(\rho) \ket{s} u_i(s) \\
& = \sum_{s} \bra{s}  \sum_{j=1}^k (A_{ij} \otimes I_{-i})\rho(A_{ij}^* \otimes I_{-i}) \ket{s} u_i(s) \\
& = \sum_{s} \bra{s}  \sum_{j=1}^k (A_{ij} \otimes I_{-i})\sum_{a,b} \sqrt{p(a)p(b)}\ket{a}\bra{b}(A_{ij}^* \otimes I_{-i}) \ket{s} u_i(s) \\
& = \sum_{s,a,b,j} \sqrt{p(a)p(b)}  \bra{s} (A_{ij} \otimes I_{-i}) \ket{a}\bra{b}(A_{ij}^* \otimes I_{-i}) \ket{s} u_i(s) \\
& = \sum_{s,a,b,j} \sqrt{p_i(a_i)p_i(b_i)} \sqrt{p_{-i}(a_{-i})p_{-i}(b_{-i})} \bra{s_i}A_{ij}\ket{a_i} \qip{s_{-i}}{a_{-i}}\bra{b_i}A_{ij}^* \ket{s_i}\qip{b_{-i}}{s_{-i}} u_i(s) \\
& = \sum_{s_i,s_{-i},a_i,b_i,j} \sqrt{p_i(a_i)p_i(b_i)}\bra{s_i}A_{ij}\ket{a_i}\bra{b_i}A_{ij}^* \ket{s_i}p_{-i}(s_{-i})u_i(s_i,s_{-i})\\
& = \sum_{s_i,s_{-i},a_i,b_i,j:p_i(a_i)>0,p_i(b_i)>0} \sqrt{p_i(a_i)p_i(b_i)}\bra{s_i}A_{ij}\ket{a_i}\bra{b_i}A_{ij}^* \ket{s_i}p_{-i}(s_{-i})u_i(s_i,s_{-i})\\
& = \sum_{s_i,a_i,b_i,j:p_i(a_i)>0,p_i(b_i)>0} \sqrt{p_i(a_i)p_i(b_i)}\bra{s_i} A_{ij}\ket{a_i}\bra{b_i}A_{ij}^*\ket{s_i}\sum_{s_{-i}}p_{-i}(s_{-i})u_i(s_i,s_{-i})\\
& \le  \sum_{s_i,a_i,b_i,j:p_i(a_i)>0,p_i(b_i)>0} \sqrt{p_i(a_i)p_i(b_i)}\bra{s_i} A_{ij}\ket{a_i}\bra{b_i}A_{ij}^*\ket{s_i}\sum_{s_{-i}}p_{-i}(s_{-i})u_i(a_i,s_{-i})\\
\end{align*}
\begin{align*}
& =  \sum_{s_i,a_i,b_i,j:p_i(a_i)>0,p_i(b_i)>0} \sqrt{p_i(a_i)p_i(b_i)}\bra{b_i}A_{ij}^*\ket{s_i}\bra{s_i} A_{ij}\ket{a_i}\sum_{s_{-i}}p_{-i}(s_{-i})u_i(a_i,s_{-i})\\
& =  \sum_{a_i,b_i,j:p_i(a_i)>0,p_i(b_i)>0} \sqrt{p_i(a_i)p_i(b_i)}\bra{b_i}A_{ij}^* A_{ij}\ket{a_i}\sum_{s_{-i}}p_{-i}(s_{-i})u_i(a_i,s_{-i})\\
& =  \sum_{a_i,b_i: p_i(a_i)>0,p_i(b_i)>0} \sqrt{p_i(a_i)p_i(b_i)}\qip{b_i}{a_i}\sum_{s_{-i}}p_{-i}(s_{-i})u_i(a_i,s_{-i})\\
& =  \sum_{\bar s_i: p_i(\bar s_i)>0} p_i(\bar s_i)\sum_{s_{-i}}p_{-i}(s_{-i}) u_i(\bar s_i, s_{-i})\\
& = \mu_i(\rho)
\end{align*}
This completes the proof of Theorem \ref{thm: purestate}.

From the above proof, one can see that if $supp(p_i) = S_i$, then the only inequality becomes the equality. We thus obtain the following fact.
\thispagestyle{empty}

\begin{Cor}\label{thm: specialpure}
If $p$ is a classical Nash equilibrium and $supp(p_i) = S_i$, then $\ket{\psi} = \sum_{s} \sqrt{p(s)}\ket{s}$ is a quantum Nash equilibrium, and any quantum operation by Player $i$ does not change his/her utility value.
\end{Cor}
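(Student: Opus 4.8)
The plan is to reuse the computation in the proof of Theorem~\ref{thm: purestate} essentially verbatim and to pinpoint the single place where it loses information. Recall that in that proof the chain of (in)equalities establishing $\mu_i(\Phi_i(\rho)) \le \mu_i(\rho)$ consists entirely of equalities except for one step, the one that replaces, inside the sum, each term $\sum_{s_{-i}} p_{-i}(s_{-i}) u_i(s_i,s_{-i})$ by $\sum_{s_{-i}} p_{-i}(s_{-i}) u_i(a_i,s_{-i})$ for an $a_i$ with $p_i(a_i)>0$, invoking the Nash inequality $\sum_{s_{-i}} p_{-i}(s_{-i}) u_i(s_i,s_{-i}) \le \sum_{s_{-i}} p_{-i}(s_{-i}) u_i(a_i,s_{-i})$. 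So it suffices to show this step is an equality when $supp(p_i)=S_i$.

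First I would observe that under the full-support hypothesis every $s_i\in S_i$ itself satisfies $p_i(s_i)>0$. Hence, for any $s_i\in S_i$ and any $a_i$ with $p_i(a_i)>0$ (which, under the hypothesis, is every $a_i$), both strategies lie in $supp(p_i)$, so the Nash condition for $p$ applies in \emph{both} directions and gives $\sum_{s_{-i}} p_{-i}(s_{-i}) u_i(s_i,s_{-i}) = \sum_{s_{-i}} p_{-i}(s_{-i}) u_i(a_i,s_{-i})$; equivalently, the quantity $\sum_{s_{-i}} p_{-i}(s_{-i}) u_i(s_i,s_{-i})$ is a constant $v_i$ over $s_i\in S_i$. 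Consequently the disputed step in the proof of Theorem~\ref{thm: purestate} replaces each summand by an equal summand, so it is an equality term by term (and in particular the sign of the coefficients $\sqrt{p_i(a_i)p_i(b_i)}\bra{s_i}A_{ij}\ket{a_i}\bra{b_i}A_{ij}^*\ket{s_i}$ is irrelevant here). Therefore the entire chain collapses to $\mu_i(\Phi_i(\rho)) = \mu_i(\rho)$, for every $\Phi_i \in CPTP(H_i)$.

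This yields both assertions at once. The identity $\mu_i(\Phi_i(\rho))=\mu_i(\rho)$ is exactly the statement that no admissible quantum operation of Player~$i$ changes Player~$i$'s utility value; and since $\mu_i(\rho)\ge\mu_i(\Phi_i(\rho))$ holds for every player and every admissible super-operator (with equality for the full-support player by the above, and by Theorem~\ref{thm: purestate} otherwise), $\ket{\psi}=\sum_s\sqrt{p(s)}\ket{s}$ is a quantum Nash equilibrium. I do not expect a genuine obstacle here; the only care needed is the bookkeeping check that the proof of Theorem~\ref{thm: purestate} really invokes the Nash inequality at exactly that one location and that every other line—the Kraus expansion, the scalar rearrangements, and the two completeness/trace-preservation contractions $\sum_{s_i}\ket{s_i}\bra{s_i}=I_i$ and $\sum_j A_{ij}^*A_{ij}=I_i$—goes through unchanged, which is precisely what the remark preceding the corollary asserts.
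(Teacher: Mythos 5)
Your proposal is correct and is essentially the paper's own argument: the paper justifies the corollary with the one-line remark that when $supp(p_i)=S_i$ the single inequality in the proof of Theorem~\ref{thm: purestate} (the Nash-inequality step) becomes an equality, which is exactly what you establish via the standard fact that all pure strategies in a full support earn the same expected payoff against $p_{-i}$. Your additional observation that equality term-by-term makes the sign of the coefficients irrelevant is a fine (and welcome) extra precaution, but it does not change the route.
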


\section{Examples in Theorem \ref{thm: examples}}
\thispagestyle{empty}

Define the utility matrices of both Player 1 and Player 2 to be:
\[
	u_1 = u_2 = u = \begin{bmatrix} 2 & 1 \\ 1 & 2 \end{bmatrix}
\]
Note that since $u$  is symmetric, so is the game. Below we will show a couple of examples where $p_\rho$ is a classical (sometimes correlated) Nash equilibrium but $\rho$ itself is not a quantum (correlated) Nash equilibrium.

\subsection*{Example 1: a mixed product state}
\thispagestyle{empty}

Suppose the initial state is
\begin{align}\label{eq: eg1-rho}
	\rho & = \frac{1}{2} \begin{bmatrix} \cos^2(\theta) & \cos(\theta)\sin(\theta) \\ \cos(\theta)\sin(\theta)  & \sin^2(\theta)	 \end{bmatrix} \otimes \ket{0}\bra{0}
	+ \frac{1}{2}\begin{bmatrix} \sin^2(\theta) & -\cos(\theta)\sin(\theta) \\ -\cos(\theta)\sin(\theta)  & \cos^2(\theta)	 \end{bmatrix}\otimes \ket{1}\bra{1}
	\\
	\nonumber \\
	& =
	\begin{bmatrix}
		\cos^2(\theta)/2 &      & \cos(\theta)\sin(\theta)/2&     \\
  		    & \sin^2(\theta)/2 &      & -\cos(\theta)\sin(\theta)/2    \\
		\cos(\theta)\sin(\theta)/2 &      & \sin^2(\theta)/2 &     \\
  		    & -\cos(\theta)\sin(\theta)/2     &      & \cos^2(\theta)/2
%		16/50 &      & 12/50&     \\
%  		    & 9/50 &      & -12/50    \\
%		12/50 &      & 9/50 &     \\
%  		    & -12/50     &      &16/50
	\end{bmatrix}
\end{align}
Take the diagonal elements to form a classical correlated distribution
\[
	p = \begin{bmatrix} \cos^2(\theta)/2 & \sin^2(\theta)/2 \\ \sin^2(\theta)/2  & \cos^2(\theta)/2	 \end{bmatrix},
\]
which is easily verified to be a classical correlated equilibrium if $\cos^2(\theta) \geq 1/2$. \\

However, $\rho$ is not a quantum Nash equilibrium. Define a unitary matrix
\[
	G = \begin{bmatrix} \cos(\theta) & \sin(\theta) \\ \sin(\theta) & -\cos(\theta) \end{bmatrix}.
%	G = \begin{bmatrix} 4/5 & 3/5 \\ 3/5 & -4/5 \end{bmatrix}.
\]
Consider
\[
	\rho' = (G\otimes I)\rho (G\otimes I) =
	\begin{bmatrix}
	1/2 & & & \\
	& 0& &    \\
	& & 0&    \\
	& & & 1/2
	\end{bmatrix}
\]
It is easily seen that $\rho'$ has higher expected utility value for player 1, actually
\[
	\mu_1(\rho') = 2, \qquad \mu_1(\rho) = 1+\cos^2(\theta).
\]

\subsection*{Example 2: an entangled pure state}
Consider
\[
	\rho = \frac{1}{2}
	\begin{bmatrix}
		\cos^2(\theta) & \cos(\theta)\sin(\theta)& \cos(\theta)\sin(\theta)& -\cos^2(\theta)    \\
		\cos(\theta)\sin(\theta) & \sin^2(\theta) & \sin^2(\theta) & -\cos(\theta)\sin(\theta)    \\
		\cos(\theta)\sin(\theta) & \sin^2(\theta) & \sin^2(\theta) & -\cos(\theta)\sin(\theta)    \\
		-\cos^2(\theta)& -\cos(\theta)\sin(\theta)&-\cos(\theta)\sin(\theta)&\cos^2(\theta)
%		16/50 & 12/50& 12/50& -16/50    \\
%		12/50 & 9/50 & 9/50 & -12/50    \\
%		12/50 & 9/50 & 9/50 & -12/50    \\
%		-16/50& -12/50&-12/50&16/50
	\end{bmatrix}
\]
Since the diagonal entries are the same as those in Eq. \eqref{eq: eg1-rho}, the induced classical distribution is also the same as before, which is a classical correlated equilibrium. Again, $\rho$ is not a quantum Nash equilibrium since
\[
	\rho' = (G\otimes I)\rho (G\otimes I) =
	\begin{bmatrix}
		1/2 & & &1/2 \\
		& 0& &    \\
		& & 0&    \\
		1/2& & & 1/2
	\end{bmatrix}
\]
and it is easy to see that $\mu_1(\rho') = 2$.

\subsection*{Example 3: (uncorrelated) Nash equilibrium}
Suppose
\[
	\rho =
	\begin{bmatrix}
		1/4 & 1/4 & 1/4 & -1/4    \\
		1/4 & 1/4 & 1/4 & -1/4    \\
		1/4 & 1/4 & 1/4 & -1/4    \\
		-1/4& -1/4&-1/4 & 1/4
	\end{bmatrix}
\]
The induced classical distribution is now
\[
	p = \begin{bmatrix} 1/4 & 1/4 \\ 1/4 & 1/4 \end{bmatrix}.
\]
It is easy to check that this is a classical correlated equilibrium. Consider
\[
	\rho' =(H\otimes I)\rho (H\otimes I) =
	\begin{bmatrix}
		1/2 & & &1/2 \\
		& 0& &    \\
		& & 0&    \\
		1/2& & & 1/2
	\end{bmatrix}
\]
where H is the Hadamard matrix. Here $\mu_1(\rho') = 2 > \mu_1(\rho)$. Therefore $\rho$ is not a quantum Nash equilibrium.

\section{Examples for Theorem \ref{thm: examples2}}
\thispagestyle{empty}

Define the utility matrices of both Player 1 and Player 2 to be:
\begin{equation}
	u_1 = u_2 = u = \begin{bmatrix} 2 & 1 \\ 1 & 2 \end{bmatrix}
\end{equation}
Consider the classical correlated distribution
\begin{equation}
	p = \begin{bmatrix} 1/2 & 0 \\ 0  & 1/2	\end{bmatrix},
\end{equation}
It is easy to check that this is a classical correlated equilibrium.

For any quantum state $\rho$ with $\rho_{ss} = p(s)$, the expected utility value for player 1 is given by $\mu_1(\rho') = 2$.  It is impossible to have any density operator $\rho'$ with $\mu_1(\rho') > 2$. It is easy to see that the expected utility value is maximized so $\rho$ is a quantum Nash equilibrium.

\end{document}